\documentclass[journal]{IEEEtran}
\usepackage{amsmath,amsfonts}
\usepackage{algorithmic}
\usepackage{algorithm}
\usepackage{array}
\usepackage[caption=false,font=normalsize,labelfont=sf,textfont=sf]{subfig}
\usepackage{textcomp}
\usepackage{stfloats}
\usepackage{url}
\usepackage{verbatim}
\usepackage{graphicx}
\usepackage{cite}
\hyphenation{op-tical net-works semi-conduc-tor IEEE-Xplore}

\usepackage{braket}
\usepackage{cuted}
\usepackage{mathtools}
\usepackage{amsthm}
\usepackage{multirow}
\usepackage{hyperref}
\usepackage{bbm}
\usepackage{relsize}
\usepackage{blkarray}

\usepackage{accents} 
\usepackage{xcolor}
\usepackage{nicematrix}
\usepackage{quantikz}

\usepackage{bm}     


\def\ET{\smallbreak\par\noindent}

\let\q=\quad
\def\vq{\;,\q}

\let\hb=\hbox

\let\Arr=\longrightarrow








\def\B(#1){\mbox{${\bf #1}$}}
\def\BB(#1){\mbox{${\bm #1}$}}
\def\C(#1){{\cal #1}}

\let\q=\quad

\def\vl{\;,\;}

\def\removelastskip{\ifdim\lastskip=0pt \else\vskip-\lastskip\fi}

\def\evv#1{\subsection*{#1}}


\def\ket#1{|#1\rangle}

\def\TT{{\hbt{T}}}
\def\with{\quad\hb{with}\quad}

\def\arrDFT{\;\stackrel{\rm DFT}{\Arr}\;}

\def\arrdvQFT{\;\stackrel{\rm dvQFT}{\Arr}\;}
\def\arrcvQFT{\;\stackrel{\rm cvQFT}{\Arr}\;}


\def\bma{\begin{bmatrix}} 
	\def\ema{\end{bmatrix}} 

\newcommand{\ba}{\begin{array}}
	\newcommand{\ea}{\end{array}}
\newcommand{\beq}{\begin{equation}}
	\newcommand{\eeq}{\end{equation}}
\newcommand{\beqa}{\begin{eqnarray}}
	\newcommand{\eeqa}{\end{eqnarray}}


\def\hbs#1{\hbox{\rm\small #1}}
\def\hbf#1{\hbox{\rm\footnotesize #1}}
\def\hbt#1{\hbox{\rm\tiny #1}}
\def\met{\frac12}
\def\mett{\frac12}
\let\D=d

\def\M(#1){\mathbb{#1}}

\def\ov{\overline}

	


	\makeatletter
	\def\cases#1{\left\{\,\vcenter{\normalbaselines\m@th
			\ialign{$##\hfil$&\quad##\hfil\crcr#1\crcr}}\right.}
	\makeatother
	
	\def\E{\hb{e}}
	\def\I{\hbs{i}}
	
	\def\texIT#1{\indent\llap{#1\enspace}\ignorespaces}
	\def\IT#1 {\smallbreak\hangindent\parindent\texIT{#1}
		\ifdim\lastskip=\smallskipamount\removelastskip\penalty55\
		\smallskip\fi}
	
	
	
	\catcode`@=11
	\newdimen\jot \jot=3pt
	\def\openup{\afterassignment\@penup\dimen@=}
	\def\@penup{\advance\lineskip\dimen@
		\advance\baselineskip\dimen@
		\advance\lineskiplimit\dimen@}
	\def\eqalign#1{\null\,\vcenter{\openup\jot\m@th
			\ialign{\strut\hfil$\displaystyle{##}$&$\displaystyle{{}##}$\hfil
				\crcr#1\crcr}}\,}
	\newif\ifdt@p
	
	\def\eqal(#1)#2{\ifbozze\xdef\etichetta{#1}\fi\eqlabel{#1}\eqalignno{#2}}

	\makeatletter
	\@ifundefined{qmatrix}
	{\newcommand{\qmatrix}[1]{\begin{bmatrix} #1 \end{bmatrix}}}
	{\renewcommand{\qmatrix}[1]{\begin{bmatrix} #1 \end{bmatrix}}}
	\makeatother

	\usepackage{amsthm}
	\theoremstyle{definition}
	\makeatletter
	\let\olddefinition\definition 
	\let\definition\relax         
	
	\let\definition\olddefinition 
	\makeatother



\newtheorem{ndefinition}{Definition}
\newtheorem{nproposition}{Proposition}
\newtheorem{ntheorem}{Theorem}


\begin{document}
	
	\title{The Quantum Fourier Transform for Continuous Variables}
	
	\author{Gianfranco~Cariolaro,
		Edi~Ruffa, Amir~Mohammad~Yaghoobianzadeh, and
		Jawad~A.~Salehi, \itshape Fellow, IEEE \normalfont %
		\thanks{G.~Cariolaro is with the Department of Information Engineering, University of Padova, Via Gradenigo 6/B, 35131 Padova, Italy (e-mail: cariolar@dei.unipd.it).}%
		\thanks{E.~Ruffa is with Vimar SpA, Via IV Novembre 32, 36063 Vicenza, Italy (e-mail: edi.ruffa@ieee.org).}%
		\thanks{J.~A.~Salehi and A.~M.~Yaghoobianzadeh are with the Sharif Quantum Center, Electrical Engineering Department, Sharif University of Technology, Tehran, Iran (e-mails: jasalehi@sharif.edu; am.yaghoobianzadeh@gmail.com).}%
	}
	
	%
	
	\maketitle
	
	\begin{abstract}
		The quantum Fourier transform  for discrete
		variable (dvQFT) is an efficient algorithm  for
		several applications. It is
		usually considered for the processing of quantum bits (qubits) and
		its efficient implementation is obtained with two elementary components:
		the Hadamard gate and the controlled--phase gate.
		In this paper, the quantum Fourier transform operating with continuous variables (cvQFT) is considered. Thus, the
		environment becomes  the Hilbert space, where the natural definition of
		the cvQFT will be related to rotation operators, which
		in the $N$--mode are completely specified by unitary matrices of order $N$. Then the cvQFT is defined as the
		rotation operator whose rotation matrix is given by the discrete Fourier transform (DFT) matrix.
		For the implementation of rotation operators with primitive components
		(single--mode rotations and beam splitters), we follow the well known Murnaghan procedure, with appropriate modifications.  Moreover, algorithms related
		to the fast Fourier transform  (FFT) are applied to reduce drastically
		the implementation complexity.
		The final part is concerned with the application of the cvQFT to general Gaussian states. In particular, we show that cvQFT has the simple effect of transforming the displacement vector by a one-dimensional DFT, the squeeze matrix by a two-dimensional DFT, and the rotation matrix by a Fourier-like similarity transform.
	\end{abstract}
	
	\begin{IEEEkeywords}
		
		quantum Fourier transform, continuous-variable quantum Fourier transform (cvQFT), fast Fourier transform (FFT)
		
	\end{IEEEkeywords}

	\section{\label{IN} INTRODUCTION}
	
	The quantum Fourier transform (QFT) for discrete
	variable (dvQFT) is an efficient algorithm  for
	several applications, as factoring, simulations of quantum systems,
	quantum chaos,  quantum tomography, and several other
	applications  \cite{Niel2000,Weistein2001}.
	The dvQFT is usually considered for the processing of quantum bits (qubits) and
	its efficient implementation is obtained with two elementary components:
	the Hadamard gate and the controlled--phase gate.
	
	In this paper  we consider the QFT operating with continuous variables
	(cvQFT) and in particular with Gaussian states. Thus, the
	environment becomes  the Hilbert space, where the natural definition of
	the cvQFT will be related to  rotation operators, which
	in the $N$--mode are completely specified by unitary matrices of order $N$. Then the quantum Fourier transform
	for continuous variables (cvQFT) is defined as the rotation operator whose rotation matrix is given by the DFT matrix.

	The implementation of the rotation operators is strictly
	related to the factorization of the unitary complex matrices,
	since the set of $N$--mode rotation operators is isomorphic to
	the Lie matrix group of the unitary $N\times N$ complex
	matrices \cite{Hall2003}. A vast literature on the factorization of unitary
	matrices is available, but all researches on the topic have a
	purely mathematical interest, with the exception (at least in
	the authors' knowledge) of an often-cited letter 
	by Reck et al., \cite{Zeil94},
	which first tackled the practical problem of realizing linear 
	optical operators by simple components. An ideal goal would
	be to factorize the unitary matrix, and thence the rotation
	operator, into blocks depending on a single real number,
	corresponding to a simple linear operator, as a single--mode
	phase shifter or a two--mode real beam splitter. To this purpose, 
	as we shall see below, the 60--years-old mathematical
	approach by Murnaghan \cite{Murn52,Murn58} remains the most suitable
	method. However, Murnaghan's approach does not arrive at
	closed-form formulas, and so we have devised an appropriate
	algebra to get explicit results \cite{Cari18}.

	The paper is organized as follows. In Section II, we review the 
	discrete Fourier transform (DFT), and also the dvQFT,
	just for comparison. In Section III, we introduce the cvQFT and recall rotation operators and related unitary phase matrices; The rest of the paper consists of two parts. Part I is concerned with the implementation of the cvQFT.
	In Section IV, we recall the modified Murnaghan procedure for the
	implementation of rotation operators with primitive components
	(single--mode rotations and beam splitters). Also, we apply
	the modified Murnaghan procedure  for the implementation of the 4--cvQFT; this 
	case is sufficient to provide a glimpse on the high complexity of the general case. However, there are
	several procedures to reduce the complexity, 
	as the use of Kronecker product \cite{Camps2020} or expressing the indexes
	in  binary form. In Section V, we apply an original method
	of complexity reduction based on the techniques of Digital Signal Processing (DSP) of the Unified Signal Theory \cite{CariUST}. Part II is about Gaussian states. Section VI is concerned with the application of the cvQFT to Gaussian states.
	In Section VII we evaluate how the covariance matrix is modified after the application of the cvQFT.

	\section{\label{DFT} THE DFT AND THE {dv}QFT}
	The discrete Fourier transform (DFT) acts on a vector of complex numbers
	$\B(s)=[s_0,s_1,\ldots s_{N-1}]^\TT$ and produces a complex vector
	$\B(S)=[S_0,S_1,\ldots S_{N-1}]^\TT$ defined by
	\beq
	S_k= \frac1{\sqrt N}\sum_{n=0}^{N-1}\;  s_n\; e^{i 2\pi{kn}/N}
	\label{U2a}
	\eeq
	The inverse DFT (IDFT) recovers the vector $\B(s)$ from the vector  
	$\B(S)$ according to
	\beq
	s_n= \frac1{\sqrt N}\sum_{k=0}^{N-1}\;  S_k\; e^{-i 2\pi{kn}/N}
	\label{U2b}
	\eeq
	With the introduction of the DFT matrix
	\beq
	\B(W)_N= \left[w_{rs}\right]_{r,s=0,1,\ldots,N-1} \with 
	w_{rs}=\frac1{\sqrt N} e^{i  2\pi rs/N}
	\label{B1}
	\eeq
	Eq.~(\ref{U2a}) becomes
	\beq
	\B(S)= \B(W)_N\;\B(s)\vq \B(s)=\B(W)_N^{-1}\;\B(S)
	\eeq
	Note that the DFT matrix is unitary:  $\B(W)_N^{-1}=\B(W)_N^*$.
	
	The brute--force application of the DFT of order $N$ has a computational
	complexity of $N^2$ operations. When $N$ is a power of 2, the fast algorithm fast Fourier transform, FFT, reduces the complexity to $N\log_2 N$ operations. 
	
	The dvQFT on an orthonormal basis $\ket 0, \ket 1,\ldots,\ket{N-1}$ is
	a linear unitary operator with the following action on the basis states
	\cite{Brown2006}, \cite{Mastriani2021}
	\beq
	\ket k \arrdvQFT\frac1{\sqrt N}\sum_{n=0}^{N-1}\;   e^{i 2\pi{kn}/N}\ket n;
	\eeq
	Equivalently, the action on an arbitrary state can  be written as
	\beq
	\ket s=\sum_{n=0}^{N-1} s_n \ket n\arrdvQFT \ket S=\sum_{k=0}^{N-1} S_k\ket k
	\eeq
	where 
	\beq
	\B(s)=[s_0,\ldots  s_{N-1}]  \arrDFT  \B(S)=[S_0,\ldots,S_{N-1}] 
	\eeq
	In words: in the dvQFT the coefficients (probability amplitudes) of the output state $\ket S$
	are given by DFT of the coefficients of the input states.
	Note that the input state  $\ket s$ is usually given by a sequence of 
	$N$ qubits,  $\ket s=\ket {s_0}\cdots\ket{s_{N-1}}$
	with $\ket{s_i}\in \text{span}\{\ket 0,\ket 1\}$.

	\paragraph{Implementation of dvQFT} For the implementation of the dvQFT  two gates are used:
	the Hadamard gate and the controlled--phase gate. The graphical symbols for these gates are given in Fig.~\ref{FF52}.
	\begin{figure}[!t]
		\includegraphics[scale=0.9]{./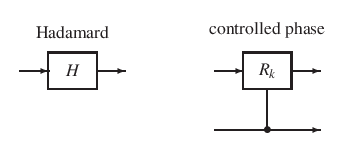}
		\centering
		\caption{The Hadamard and the controlled--phase gates.}
		\label{FF52}
	\vspace*{0pt}
	\end{figure}	
	The Hadamard gate
	acts on a single qubit.  It is represented by the Hadamard matrix
	\beq
	\B(H)=\frac{1}{\sqrt2}
	\qmatrix{1 & 1\cr
		1&-1\cr}=\B(W)_2
	\label{H6B}
	\eeq
	that is by the 2--DFT matrix. It maps the input qubit as follows
	\beq
	\ket0\mapsto \frac{1}{\sqrt2}(\ket0+\ket1)\vq
	\ket1\mapsto \frac{1}{\sqrt2}(\ket0-\ket1)
	\eeq

	The block $\B(R)_k = \B(R)(\frac{2\pi}{2^k})$ is a  controlled--phase gate, where it is described by the matrix
	\beq
	\B(R)(\phi)=
	\qmatrix{1&0&0&0 \cr
		0&1&0&0\cr
		0&0&1&0\cr
		0&0&0&e^{i\phi}\cr}
	\eeq
	With respect to the reference basis it 	shifts  by  $\phi$ only when the input
	is $\ket{1}\ket{1}$
	\beq
	\ket{a,b}\;\mapsto\; \cases{ e^{i\phi}\ket{a,b}& \text{for $a=b=1$}\cr
		\ket{a,b}       & \text{otherwise}\cr}
	\eeq
	\ET The global scheme is illustrated in Fig.~\ref{FF50}.
	
	\begin{figure*}[!t]
		\includegraphics[scale=1.1]{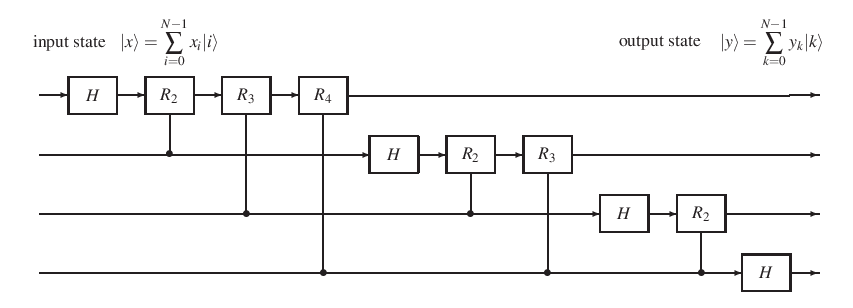}
		\centering
		\caption{Implementation of the dvQFT for $N=4$ according to 
		Ref. \cite{Niel2000}.
		The coefficients $x_i$ and $y_k$ are related by the 4--DFT.}
		\label{FF50}
	\hrulefill
	\vspace*{0pt}
	\end{figure*}

	\section{\label{DF} DEFINITION OF QFT FOR CONTINUOUS VARIABLES (cvQFT)}

	The definition is formulated in terms of quantum rotation operators. Then
	we recall that a rotation operator in the $N$--mode 
	Hilbert space $\C(H)^N$ has the form
	\beq
	R(\BB(\phi))=\E^{\I \,\B(a)^*\BB(\phi)\B(a)}  \label{A2}
	\eeq
	where  $\BB(\phi)$ is an $N\times N$ Hermitian matrix and $\B(a)$ collects
	the $N$ annihilation operators.
	The corresponding Bogoliubov transformation is given by
	\beq
	R^*(\BB(\phi))\,\B(a)\,R(\BB(\phi))=\E^{\I\BB(\phi)}\B(a)
	\label{A4}
	\eeq
	The $N\times N$ unitary matrix associated to the rotation operator
	\beq
	\B(U)_{\BB(\phi)}=\E^{\I\BB(\phi)}
	\label{A6}
	\eeq
	completely specifies the rotation operator $R(\BB(\phi))$.
	Given the matrix $\B(U)_{\BB(\phi)}$, relation (\ref{A6}) uniquely identifies the
	phase  matrix $\BB(\phi)$, see \cite{MaRh90},
	but the evaluation of $\BB(\phi)$ is not necessary
	because every application will work only in terms of the matrix $\B(U)_{\BB(\phi)}$,
	as is in the  Bogoliubov transformation (\ref{A4}).
	
	We are now ready for the definition:
	
	\begin{ndefinition}
		The quantum Fourier transform for continuous
		variables (cvQFT) is the transformation in the Hilbert space $\C(H)^N$
		performed by a rotation
		operator whose unitary matrix $\B(U)_{\BB(\phi)}$ is the DFT matrix 
		\beq
		\B(U)_{\BB(\phi)}= \E^{\I\phi_{\hbt{DFT}}} = \B(W)_N
		\eeq
		The inverse transformation (IcvQFT) is performed by a rotation operator 
		whose rotation matrix is the IDFT matrix
		$\B(U)_{\BB(\phi)}^*=\B(W)_N^*=\B(W)_N^{-1}$.
		There have been exploited similar expressions in other contexts 
		\cite{rezai2021quantum, barak2007quantum, tabia2016recursive}.
		\label{D2}
	\end{ndefinition}

	The application of the cvQFT to a pure 
	quantum state $\ket{\gamma}\in\C(H)^N$
	provides the transformation 
	\beq
	\ket{\gamma} \arrcvQFT  \ket{\gamma_{\hbt{QFT}}}= R(\phi_{\hbt{DFT}})\,\ket{\gamma}\label{A8} 
	\eeq
	and for a mixed (noisy) state the transformation is
	\beq
	\rho\arrcvQFT  \rho_{\hbt{QFT}}= R(\phi_{\hbt{DFT}})\,\rho\,R^*(\phi_{\hbt{DFT}}) \label{A9}
	\eeq
	as illustrated in Fig.~\ref{FF51}.
	\begin{figure}[!t]
		\includegraphics[scale=0.9]{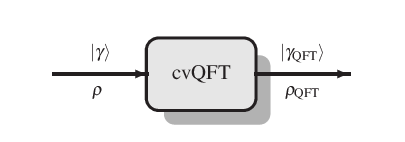}
		\centering
		\caption{Application of the cvQFT to a 
			pure quantum state and to 
			a mixed quantum state.}
		\label{FF51}
	\end{figure}

	In the simplest case the quantum state 
	may be an $N$--mode  displacement, but   it may be a Gaussian
	state  (squeezed+displacement), pure or mixed , and also 
	a non Gaussian state, e.g. a photon added Gaussian state  \cite{Cari2022}.
	
	As said above, the evaluation of the 
	rotation matrix $\phi_{\hbt{DFT}}$ such that $e^{i\phi_{\hbt{DFT}}}=\B(W)_N$ 
	has no relevance. However, 
	for curiosity,  the evaluation for the first orders gives
	\footnote{obtained with \texttt{ MatrixFunction[Log,$W_N$]} of \texttt{ Mathematica.}}:
	
	\begin{itemize}
		\item For $N=1$ 
		\beq
		\E^{\I\BB(\phi)} = \B(W)_1=[1]\q\to\q\BB(\phi)=[0]
		\eeq
		\item For $N=2$
		\begin{equation}
			\begin{split}
				\E^{\I\BB(\phi)} &= \B(W)_2= \frac{1}{\sqrt{2}}\left[
				\begin{array}{cc}
					1  & 1 \\
					1 & -1  
				\end{array}\right] \q\to\\
				\BB(\phi)&=\left[
				\begin{array}{cc}
					-\frac{1}{4} \left(-2+\sqrt{2}\right) \pi  & -\frac{\pi }{2 \sqrt{2}} \\
					-\frac{\pi }{2 \sqrt{2}} & \frac{1}{4} \left(2+\sqrt{2}\right) \pi  \\
				\end{array}\right]
			\end{split}
		\end{equation}
	\end{itemize}

	\section*{PART I: IMPLEMENTATION OF THE cvQFT }
	
	Considering the definition, the practical application of the cvQFT is essentially based on the implementation of the rotation operators with simple quantum components. As known, this problem is solved by a factorization of the associated unitary matrix, in such a way that each factor depends
	on a single real number. The corresponding theory, based on the
	Murnaghan  procedure, is recalled in the next
	section and leads to explicit results. However for $N$ large
	the Murnaghan procedure leads to very complicated structures.
	But in the cvQFT, the unitary matrix is given by the DFT  matrix. Then, with the help of digital signal processing (DSP), mainly the fast Fourier transform, we will find a very simple solution.

	\section{\label{CH} THE MURNAGHAN PROCEDURE}
	
	In this section we recall the Murnaghan approach of recursive factorization
	of a unitary matrix, which leads to the implementation of  rotation operators
	with elementary components. 
	We begin with the description of these components.

	\subsection{Primitive components  for the implementation}

	The primitive components, which  are illustrated in Fig. \ref{FF24}, are

	\begin{figure}[!h]
		\includegraphics[scale=0.9]{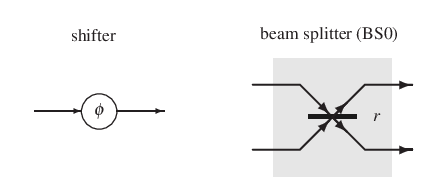}
		\centering
		\caption{Graphical representation of the two primitive 
		components.}
		\label{FF24}
	\end{figure}

	\IT 1) {\it phase shifters}, which are single--mode rotation operators,
	specified by a  scalar phase $\phi$,
	\IT 2) {\it free-phase beam splitters} (BS0), specified by the
	unitary matrix
	\beq   
	\B(U)_{\hbt{BS0}} = \qmatrix{t&r\cr r&-t\cr}
	\eeq
	where $r$ is the reflectivity and $t=\sqrt{1-r^2}$ is the transmissivity.

	\ET Two other elementary blocks are obtained from these primitive components;
	\IT i) {\it beam splitter with phases} (BS$\gamma$), 
	specified by a $2\times2$ unitary matrix, say
	\beq
	\B(U)_{\hbt{BS$\gamma$}}=\qmatrix{ te^{i\gamma_{11}} & re^{i\gamma_{12}} \cr re^{i\gamma_{21}} & -te^{i\gamma_{22}} \cr},
	\label{A21}
	\eeq
	where $\gamma_{21} = \gamma_{11} + \gamma_{22} - \gamma_{12}$.
	\IT ii) {\it $N$--input  $N$-output BSs with phase}
	($N$BS$\gamma$), which are essentially 
	BS with phase with $N-2$ extra connections. This unitary matrix is obtained by inserting in  the identity matrix of order $N$ the parameters of a BS$\gamma$.
	
	\ET To find the implementation of BS$\gamma$ with primitive components we recall \cite{Cari18}.

	\begin{nproposition}
		An arbitrary two-mode rotation operator,
		specified by the unitary matrix given by (\ref{A21}), 
		can be implemented by (1) two phase shifters with 
		phases $\gamma_{11}$ and  $\gamma_{12}$, followed by
		(2) a BS0
		with reflectivity $r$, followed by (3) a phase shifter  with phase $\mu=\gamma_{22}-\gamma_{12}$, as shown in Fig.~\ref{DE273}.
		\label{P2}
	\end{nproposition}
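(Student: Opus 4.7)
The plan is to verify the claimed decomposition by direct multiplication of the three $2\times 2$ unitaries of the stages in the proposed circuit, and to check that the constraint $\gamma_{21}=\gamma_{11}+\gamma_{22}-\gamma_{12}$ is precisely what makes the product coincide with $\B(U)_{\hbt{BS$\gamma$}}$. First I would assign a rotation matrix to each stage in the order the state crosses it: the pair of input phase shifters give $\diag(e^{i\gamma_{11}},e^{i\gamma_{12}})$, the BS0 gives $\qmatrix{t & r \cr r & -t \cr}$, and the final single--mode phase shifter on the second port gives $\diag(1,e^{i\mu})$ with $\mu=\gamma_{22}-\gamma_{12}$. Since the Bogoliubov composition law (\ref{A4}) stacks unitary factors right--to--left, the overall circuit unitary is
$$\qmatrix{1 & 0 \cr 0 & e^{i\mu} \cr}\qmatrix{t & r \cr r & -t \cr}\qmatrix{e^{i\gamma_{11}} & 0 \cr 0 & e^{i\gamma_{12}} \cr}.$$

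Next I would carry out the two elementary multiplications. Multiplying the BS0 by the right--hand diagonal yields $\qmatrix{te^{i\gamma_{11}} & re^{i\gamma_{12}} \cr re^{i\gamma_{11}} & -te^{i\gamma_{12}} \cr}$, and left--multiplication by $\diag(1,e^{i\mu})$ merely rescales the second row by $e^{i\mu}$. Substituting $\mu=\gamma_{22}-\gamma_{12}$, the $(2,2)$ entry becomes $-te^{i\gamma_{22}}$ as required, while the $(2,1)$ entry becomes $re^{i(\gamma_{11}+\gamma_{22}-\gamma_{12})}$, which by the stated relation is exactly $re^{i\gamma_{21}}$. Hence the product agrees with $\B(U)_{\hbt{BS$\gamma$}}$ entry by entry, which is the claim.

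There is no real obstacle in the argument; the only point worth a remark is the status of the relation $\gamma_{21}=\gamma_{11}+\gamma_{22}-\gamma_{12}$. It should not be viewed as an extra restriction on the class of BS$\gamma$ operators: imposing column orthogonality on $\B(U)_{\hbt{BS$\gamma$}}$ with $r,t\in\mathbb{R}$ and $rt\ne 0$ already forces $e^{i(\gamma_{11}-\gamma_{21})}=e^{i(\gamma_{12}-\gamma_{22})}$, which is exactly this relation modulo $2\pi$. So the triple $(\gamma_{11},\gamma_{12},\gamma_{22})$ together with $r$ parametrizes every two--mode BS$\gamma$, and the three--stage circuit of Fig.~\ref{DE273} realizes all of them.
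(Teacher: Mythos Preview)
Your proof is correct and follows the same approach as the paper: the paper simply records the factorization (\ref{A23}) and attributes it to the orthogonality (unitarity) condition on $\B(U)_{\hbt{BS$\gamma$}}$, which is exactly what you verify by direct multiplication and by showing that column orthogonality forces $\gamma_{21}=\gamma_{11}+\gamma_{22}-\gamma_{12}$.
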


	\begin{figure}[!t]
		\includegraphics[scale=0.9]{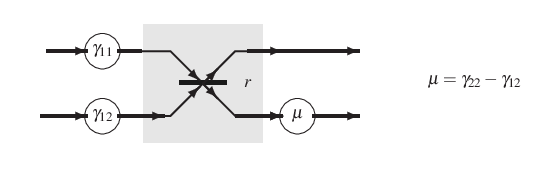}
		\centering
		\caption{The beam splitter with phase.}
		\label{DE273}
	\end{figure}

	\ET
	The proof is a consequence of  the orthogonality condition of 
	the matrix $\B(U)_{\hbt{BS$\gamma$}}$, which leads to the factorization 
	\beq 
	\B(U)_{\hbt{BS$\gamma$}}=\left[
	\begin{array}{cc}
		1 & 0 \\
		0 & e^{i (\gamma _{22}-\gamma _{12})} \\
	\end{array}\right]\left[
	\begin{array}{cc}
		t & r\\
		r & -t \\
	\end{array}\right]\left[
	\begin{array}{cc}
		e^{i \gamma _{11}} & 0 \\
		0 & e^{i \gamma _{12}} \\
	\end{array}\right]
	\label{A23}
	\eeq

	These blocks $N$BS$\gamma$, symbolized $\B(T)_{i,j}(r,\beta)_N$,
	depend on the order $N$ and 
	the indexes $i,j$ with $i,j=1,2,\ldots,N$ with  $j>i$, which denote the rows where 
	the BS$\gamma$ is inserted in the identity matrix.
	Their expressions are given,  for $N=3$
	\begin{subequations}
		\begin{align}
			\B(T)_{12}(r,\beta)_3=\left[
			\begin{array}{ccc}
				t &  re^{i \beta }  &0\\
				re^{-i \beta}  &  -t&0 \\
				0 & 0 & 1 \\
			\end{array}
			\right]
			\\
			\B(T)_{13}(r,\beta)_3=\left[
			\begin{array}{ccc}
				t &  0 &re^{i \beta }\\
				0 &  1&0\\
				re^{-i \beta}  & 0& -t \\ 
			\end{array}
			\right]
		\end{align}
	\end{subequations}
	
	Their implementation consists of a BS0, two  phase shifters with
	opposite phases, and $N-2$ identity connections, as illustrated
	in Fig.~\ref{FF176} for $N=3$.

	\begin{figure}[!t]		
		\includegraphics[scale=0.6]{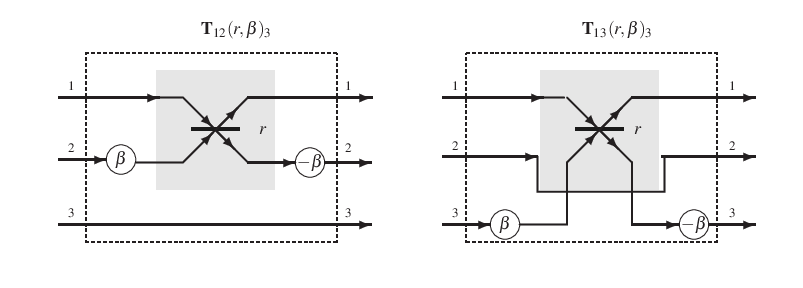}
		\centering
		\caption{Implementation of the blocks  $\B(T)_{12}(r,\beta)_3$ 
			and $\B(T)_{13}(r,\beta)_3$.}
		\label{FF176}
	\end{figure}

	\subsection{The modified Murnaghan procedure}
	
	Given an $N\times N$
	unitary matrix $\B(U)$, which we write in the polar form
	\beq
	\B(U)_N=\left[
	\begin{array}{cccc}
		u_{11} e^{i \gamma _{11}} & u _{12} e^{i \gamma _{12}} & \cdots & u _{1N} e^{i \gamma _{1N}} \\
		u _{21} e^{i \gamma _{21}} & u _{22} e^{i \gamma _{22}} & \cdots & u _{2N} e^{i \gamma _{2N}} \\
		\vdots & \vdots  &  \ddots   &\vdots  \\
		u _{N1} e^{i \gamma _{N1}} & u _{N2} e^{i \gamma _{N2}} & \cdots & u _{NN} e^{i \gamma _{NN}} \\
	\end{array}
	\right],
	\eeq
	the basic idea of the reduction procedure is to find a suitable unitary matrix $\B(V)_N$ such 
	that
	\beq
	\B(U)_N\B(V)_N=\qmatrix{ w&0\cr0&\B(U)_{N-1}\cr}\;,\label{UV}
	\eeq
	where $w$ is a complex number\ and $\B(U)_{N-1}$ is an $(N-1)\times(N-1)$ matrix. Provided that 
	$\B(U)_N$ and $\B(V)_N$ are unitary, the same holds for the right side of (\ref{UV}), so that $w$
	has modulus 1 and $\B(U)_{N-1}$ is unitary. 
	\begin{nproposition}
		The matrix $\B(V)_N$ with the desired  property (\ref{UV})
		is given by
		\beq
		\B(V)_N=\B(T)^*_{12}(r_2,\beta_2)_N\,\B(T)^*_{13}(r_3,\beta_3)_N\,\cdots\, \B(T)^*_{1N}(r_{N-1},\beta_{N-1})_N
		\label{UV3}
		\eeq
		where the parameters of the complex BSs ($N$BS$\gamma$) are  given by
		\beq
			r_{i}=\frac{u _{1i}}
			{\sqrt{u _{11}^2+\cdots+u _{1i-1}^2+u _{1i}^2}}\vq \beta_{i}=\gamma _{1i}-\gamma _{11}
		\label{H4n}
		\eeq
		where $i=2,\ldots, N$.
		The complex number $w$ is given by
		\beq
		w=e^{i\gamma_{11}}
		\label{LnA}
		\eeq
	\end{nproposition}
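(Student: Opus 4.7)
The plan is to prove the proposition by applying the factors of $\B(V)_N$ one at a time on the right of $\B(U)_N$ and tracking only the evolution of the first row. Because each $\B(T)^*_{1,i}(r_i,\beta_i)_N$ acts as the identity on every column other than $1$ and $i$, right-multiplication by this factor touches only the first and $i$-th columns of the running product, and only through a $2\times 2$ mix determined by $r_i$ and $\beta_i$. Consequently the induction reduces to a sequence of $2\times 2$ calculations, and all entries of the first row outside positions $1$ and $k{+}1$ at step $k$ are either left untouched or already zeroed at an earlier step.

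I would set $\B(U)^{(1)} = \B(U)_N\,\B(T)^*_{12}(r_2,\beta_2)_N$ and then inductively define $\B(U)^{(k)} = \B(U)^{(k-1)}\,\B(T)^*_{1,k+1}(r_{k+1},\beta_{k+1})_N$, and claim that the first row of $\B(U)^{(k)}$ equals
\[
\bigl[\sigma_{k+1}\,e^{i\gamma_{11}},\; 0,\;\ldots,\; 0,\; u_{1,k+2}e^{i\gamma_{1,k+2}},\;\ldots,\; u_{1,N}e^{i\gamma_{1,N}}\bigr],
\]
where $\sigma_{k+1} := \sqrt{u_{11}^2 + \cdots + u_{1,k+1}^2}$ and the zeros occupy positions $2,\ldots,k+1$. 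The base case $k=1$ is a direct $2\times 2$ computation: with $\beta_2=\gamma_{12}-\gamma_{11}$ the common phase $e^{i\gamma_{11}}$ factors out of the new $(1,1)$ entry, leaving $u_{11}t_2+u_{12}r_2$, which the definitions $r_2=u_{12}/\sigma_2$ and $t_2=u_{11}/\sigma_2$ collapse to $\sigma_2$; the orthogonal combination $u_{11}r_2-u_{12}t_2=0$ kills the new $(1,2)$ entry. The induction step is structurally identical, with $u_{11}$ replaced by $\sigma_k$ and $\gamma_{11}$ preserved throughout; the entries at positions $2,\ldots,k$ stay zero trivially because $\B(T)^*_{1,k+1}$ leaves those columns unchanged.

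Taking $k = N-1$ gives first row $[\sigma_N e^{i\gamma_{11}},\,0,\,\ldots,\,0]$, and unitarity of $\B(U)_N$ forces $\sigma_N = 1$, so $w = e^{i\gamma_{11}}$, establishing (\ref{LnA}). To obtain the block form (\ref{UV}) I would then invoke unitarity of the whole product $\B(U)_N\B(V)_N$: a unitary matrix whose first row is $[w,0,\ldots,0]$ with $|w|=1$ automatically has first column $[w,0,\ldots,0]^\TT$, and its residual $(N-1)\times(N-1)$ southeast block is unitary, which is precisely the $\B(U)_{N-1}$ of the statement.

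The pedagogical point worth highlighting is that formulas (\ref{H4n}) are not guessed but forced: at each stage, the two requirements ``$e^{i\gamma_{11}}$ factors out of the new $(1,1)$ entry'' and ``the new $(1,i)$ entry is zero'' give two real equations in the two parameters $r_i,\beta_i$ of the $N$BS$\gamma$, with the closed-form solution (\ref{H4n}). The main obstacle is really just bookkeeping of phases; once one observes that $\beta_i=\gamma_{1i}-\gamma_{11}$ is the unique choice that aligns the phase of $u_{1i}e^{i\gamma_{1i}}$ with the running phase $e^{i\gamma_{11}}$ in column $1$, the whole argument collapses to Pythagoras' identity $t_i^2+r_i^2=1$ together with the trivial fact that each $\B(T)^*_{1,i}$ preserves the other columns of the running product.
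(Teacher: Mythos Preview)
Your proof is correct and complete. The paper itself does not supply a proof of this proposition---it is stated as a known result, with the justification deferred to Murnaghan's original work and to reference \cite{Cari18}---so there is nothing to compare against directly. Your argument is precisely the standard Givens--rotation mechanism underlying Murnaghan's reduction: each $\B(T)^*_{1i}$ is a planar rotation in the $(1,i)$ columns, the choice $\beta_i=\gamma_{1i}-\gamma_{11}$ aligns the phase of the $(1,i)$ entry with the running phase in column~1 so that only real arithmetic remains, and then $r_i=u_{1i}/\sigma_i$, $t_i=\sigma_{i-1}/\sigma_i$ is exactly the Givens pair that annihilates the $(1,i)$ entry while updating the $(1,1)$ entry from $\sigma_{i-1}e^{i\gamma_{11}}$ to $\sigma_i e^{i\gamma_{11}}$. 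The order of the factors in (\ref{UV3}) guarantees that zeros created at earlier stages persist, since $\B(T)^*_{1,k+1}$ acts trivially on columns $2,\ldots,k$. Your final appeal to unitarity---that $\sigma_N=1$ because the first row of $\B(U)_N$ has unit norm, and that a unitary matrix with first row $[w,0,\ldots,0]$ necessarily has first column $[w,0,\ldots,0]^\TT$---cleanly delivers the block form (\ref{UV}) and the value (\ref{LnA}). Nothing is missing.
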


	It is important to note that the reduction of the unitary matrix from the order $N$ to the order $N-1$ is obtained with $N-1$ $N$BS$\gamma$, that is, with simple BSs and phase shifters.
	The reduction procedure can be applied to
	the matrix $\B(U)_{N-1}$ to get a  matrix $\B(U)_{N-2}$ of order $N-2$ and it can be repeated until one gets a matrix $\B(U)_2$ of order 2.
	This iterative procedure will be explicitly applied in the next subsection for an arbitrary unitary matrix of order 4 and finally to the matrix of the 4--DFT. 
	
	The final complexity  is \cite{Cari18}
	\beq \label{PP2}
	\mett N(N-1)\q\hb{BS0}\vq N(N-1)+1\q \hb{phase shifters}. 
	\eeq

	\subsection{The iterative  procedure for $N=4$} 
	
	We illustrate the iterative procedure for $N=4$, where the unitary matrix is
	\beq
	\B(U)_4=\qmatrix{
		u _{11}e^{i\gamma _{11}}&u_{12}e^{i\gamma_{12}}&u_{13}e^{i\gamma_{13}}&u_{14}
		e^{i\gamma_{14}}\cr
		u _{21}e^{i\gamma _{21}}&u_{22}e^{i\gamma_{22}}&u_{23}e^{i\gamma_{23}}&u_{24}
		e^{i\gamma_{24}}\cr
		u _{31}e^{i\gamma _{31}}&u_{32}e^{i\gamma_{32}}&u_{33}e^{i\gamma_{33}}&u_{34}
		e^{i\gamma_{34}}\cr
		u _{41}e^{i\gamma _{41}}&u_{42}e^{i\gamma_{42}}&u_{43}e^{i\gamma_{43}}&u_{44}
		e^{i\gamma_{44}}\cr}
	\eeq
	
	Then the reduction is performed in $N-2=2$ steps and leads to the architecture illustrated in Fig.~\ref{DE175}.

	\begin{figure}[!t]
		\includegraphics[scale=0.6]{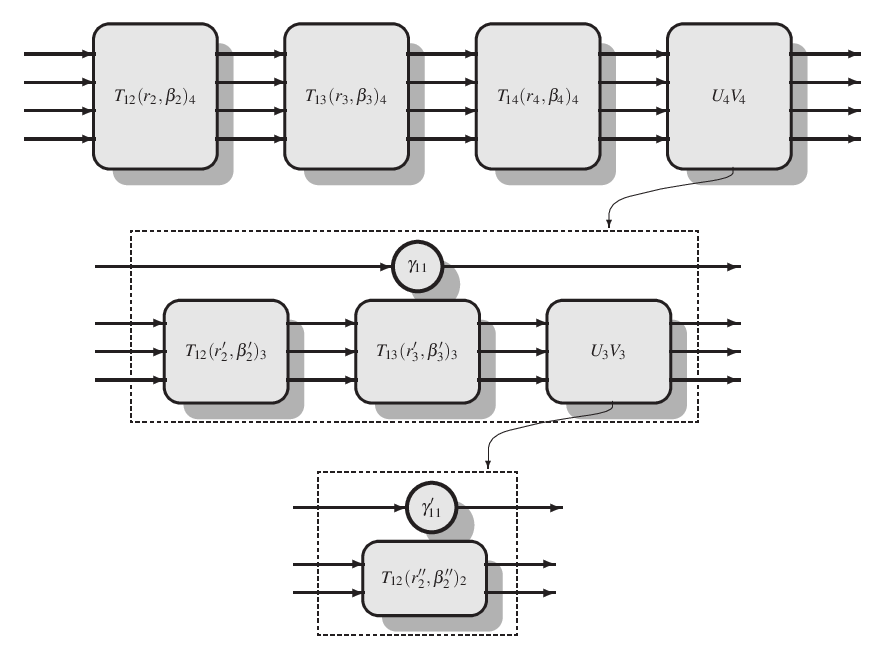}
		\centering
		\caption{Implementation of a $4\times4$ unitary matrix 
		in the general case through phase shifters and beam splitters 
		using the modified Murnaghan procedure.}
		\label{DE175}
	\end{figure}

	In {\bf Step 1} we evaluate the parameters of the 3 complex BSs in
	\beq
	\B(V)_4= \B(T)^*_{12}(r_2,\beta_2)_4\B(T)^*_{13}(r_3,\beta_3)_4\B(T)^*_{14}(r_4,\beta_4)_4
	\eeq
	given by
	\beq
	\eqalign{
		&\beta_{2}=\gamma_{12}-\gamma_{11}\quad,\quad r_{2}={u_{12}\over\sqrt{u_{11}^2+
				u_{12}^2}}\cr
		&\beta_{3}=\gamma_{13}-\gamma_{11}\quad,\quad
		r_{3}=
		{u_{13}\over\sqrt{u_{11}^2+u_{12}^2+u_{13}^2}
		}\cr
		&\beta_{4}=\gamma_{14}-\gamma_{11}\quad,\quad r_{4}=
		\frac{u_{14}}{\sqrt{u_{11}^2+u_{12}^2+u_{13}^2+u_{14}^2}}\cr}
	\label{TT4}
	\eeq
	Then 
	\beq
	\B(U)_4\B(V)_4\,
	=\qmatrix{ e^{i\gamma_{11}}&0\cr0&\B(U)_3\cr}
	\eeq

	In {\bf Step 2} we perform the reduction of the matrix $\B(U)_3$, which we write in the  
	modulus--argument form $\B(U)_3=[u'_{ij}\,e^{i \gamma'_{ij}}]\;,\;i,j=1,2,3$.
	Then 
	we evaluate the parameters of the  2 complex BSs of the middle part of 
	Fig.~\ref{DE175}, which gives
	\beq
	\B(V)_3=\B(T)^*_{23}(r'_{2},\beta'_{2})_3\B(T)^*_{24}(r'_{3},\beta'_{3})_3
	\eeq
	where
	\beq
	\eqalign{
		&\beta'_{2}=\gamma'_{12}-\gamma'_{11}\quad,\quad r'_{2}
		=\frac{u'_{12}}{\sqrt{{u'}^{2}_{11}+{u'}^{2}_{12}}}\cr
		&\beta'_{3}=\gamma'_{13}-\gamma'_{11}\quad,\quad
		r'_{3}=\frac{{u'}_{13}}{\sqrt{{u'}_{11}^2+
				{u'}_{12}^2+{u'}_{13}^2}}
		\cr}
	\label{AA5T}
	\eeq
	At this point we find
	\beq
	\B(U)_3\B(V)_3
	=\qmatrix{ e^{i\gamma'_{11}} & 0\cr
		0                 & \B(U)_2 \cr} \label{C6}
	\eeq
	Finally the unitary matrix $\B(U)_2$ of order 2 is implemented
	according to Prop.~\ref{P2}.

	\subsection{Application of the Murnaghan procedure to the to 4--cvQFT}
	
	In this section we apply the Murnaghan procedure to the 4--cvQFT. This case is sufficient to preview how the procedure works in the general case of
	$N$--cvQFT.
	
	The 4--DFT matrix
	\beq
	\B(U)_4=\B(W)_4=\met\left[
	\begin{array}{cccc}
		1 & 1 & 1 & 1 \\
		1 & i & -1 & -i \\
		1 & -1 & 1 & -1 \\
		1 & -i & -1 & i \\
	\end{array}
	\right]
	\eeq
	is unitary and can be decomposed with the general procedure in two steps.  
	In the first step,
	\beq
	\B(V)_4= \B(T)^*_{12}(r_2,\beta_2)_4\B(T)^*_{13}(r_3,\beta_3)_4\B(T)^*_{14}(r_3,\beta_4)_4
	\eeq
	where
	\beq
	r_2=\frac{1}{\sqrt{2}},\; \beta_2=0,\;
	r_3=\frac{1}{\sqrt{3}},\; \beta_3=0,\;
	r_4=\met,\; \beta_4=0
	\eeq
	Then
	\beq
	\B(U)_4\B(V)_4=\left[
	\begin{array}{cccc}
		1 & 0 & 0 & 0 \\
		0 & \frac{-1+i}{2\sqrt{2}} & -\frac{3+i}{2\sqrt{6}} & -\frac{i}{\sqrt{3}}\\
		0 & -\frac{1}{\sqrt{2}} & \frac{1}{\sqrt{6}} & -\frac{1}{\sqrt{3}}\\
		0 & -\frac{1+i}{2\sqrt{2}} & -\frac{3-i}{2\sqrt{6}} & \frac{i}{\sqrt{3}}\\
	\end{array}\right]=\qmatrix{1&0\cr0&\B(U)_3\cr}
	\eeq

	In the second step, we reduce the matrix
	\beq
	\B(U)_3=\left[
	\begin{array}{ccc}
		\frac{-1+i}{2\sqrt{2}} & -\frac{3+i}{2\sqrt{6}} & -\frac{i}{\sqrt{3}}\\
		-\frac{1}{\sqrt{2}} & \frac{1}{\sqrt{6}} & -\frac{1}{\sqrt{3}}\\
		-\frac{1+i}{2\sqrt{2}} & -\frac{3-i}{2\sqrt{6}} & \frac{i}{\sqrt{3}}\\
	\end{array}\right]
	\eeq
	by the application of 
	\beq
	\B(V)_3=\B(T)^*_{12}(r'_{2},\beta'_{2})_3\B(T)^*_{13}(r'_{3},\beta'_{3})_3
	\eeq
	where
	\begin{equation}
		\begin{split}
			r_2'&=\frac{\sqrt{10}}{4}\vq
			\beta_2'=\tan ^{-1}\left(\frac{1}{3}\right)+\frac{ \pi }{4}\\
			r_3'&=\frac{1}{\sqrt{3}}\vq \beta_3'=\frac{3 \pi }{4}
		\end{split}
	\end{equation}
	one gets
	\beq
	\begin{split}
		\B(U)_3\B(V)_3=
		\left[
		\begin{array}{ccc}
			\frac{-1+i}{\sqrt{2}} & 0 & 0 \\
			0 & \frac{1}{2}+\frac{i}{2} & -\frac{1}{\sqrt{2}} \\
			0 & -\frac{1}{2}+\frac{i}{2} & \frac{i}{\sqrt{2}} \\
		\end{array}\right]=
		\qmatrix{ e^{i\gamma'_{11}}&0\cr0&\B(U)_2\cr} 
	\end{split}
	\eeq
	where 
	\beq
	\begin{split}
		\B(U)_2=
		\left[
		\begin{array}{ccc}
			\frac{1}{2}+\frac{i}{2} & -\frac{1}{\sqrt{2}} \\
			-\frac{1}{2}+\frac{i}{2} & \frac{i}{\sqrt{2}} \\
		\end{array}\right]
	\end{split}
	\eeq
	The detailed synthesis is illustrated in Fig.~\ref{FF36}.

	\begin{figure*}[!t]
		\includegraphics[scale=0.8]{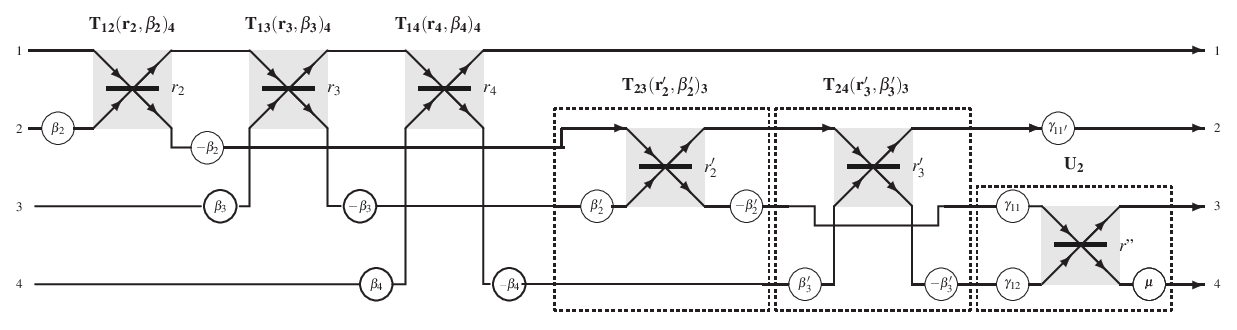}
		\centering
		\caption{Implementation of the 4--cvQFT according to the 
		modified Murnaghan approach. 
		For the values of the parameters see the text.}
		\label{FF36}
	\hrulefill
	\vspace*{0pt}
	\end{figure*}

	\section{\label{nonUT}AN EFFICIENT REDUCTION FOR THE cvQFT: TIME DECIMATION}
	
	We have seen that with the available approach the implementation 
	of the cvQFT becomes complicated just for small values of $N$, as seen
	for $N=4$. Thus, for high values of $N$,  a search
	for a more efficient approach becomes mandatory. To this end we have
	investigated
	the technique of the efficient calculation of the DFT of a deterministic
	signal in the field of DSP, known as fast Fourier transform (FFT).
	As a matter of fact, the complexity  of the Fourier transform 
	of a signal with $N$ values
	through the DFT increases with the law $N^2$, while with the evaluation
	through the FFT the law becomes $N\log N$, with revolutionary consequences.
	
	Now, following the theory of the DFT, called time decimation,
	we have found a very efficient algorithm for the implementation
	of the cvQFT. Here, we do not introduce the time decimation, 
	but we limit ourselves in the formulation of the  algorithm and we will give 
	an autonomous proof, not related to the digital signal processing.
	
	Note that there is a one--to-one correspondence between the $N$--cvQFT 
	and the $N$--DFT, so that the implementation of the  $N$--cvQFT can be
	obtained from the implementation of the $N$--DFT matrix.
	
	We consider the DFT matrix of order $N$ with $N$ a power of 2
	\beq
	\B(W)_N= \left[w_{rs}\right]_{r,s=0,1,\ldots,N-1} \with 
	w_{rs}=\frac1{\sqrt N} e^{i  2\pi rs/N}
	\label{B1A}
	\eeq	
	The fast reduction consists in the
	decomposition of the DFT matrix $\B(W)_N$ into two DFT matrices  $\B(W)_L$.
	If $N=2^m$ is a power of two, in $m-1$ iterations one can decompose
	the original matrix $\B(W)_N$ into DFT matrices $\B(W)_2$.

	\begin{ntheorem}
		Let $N$ be an arbitrary even integer and let $L=\frac{N}{2}$.
		Then the $N$--cvQFT can be implemented by the following steps:
		\IT 1) Split the input modes
		\beq
		\B(a)=[\hat{a}_0,\hat{a}_1,\ldots, \hat{a}_{N-1}]^\TT
		\label{B2}
		\eeq
		into the two subsets $\B(a)_0$ and $\B(a)_1$ of size $L=\frac{N}{2}$ 
		\beq
		\B(a)_0=[\hat{a}_0,\hat{a}_2,\ldots,\hat{a}_{N-2}]^\TT\vq \B(a)_1=[\hat{a}_1,\hat{a}_3,\ldots,\hat{a}_{N-1}]^\TT.
		\label{B3}
		\eeq
		\IT 2) Two $L$--point cvQFT, giving
		\begin{equation}
			\begin{split}
				\hat{b}_{0k}&= \frac1{\sqrt L}\sum_{j=0}^{L-1}\; \hat{a}_{2j}\; e^{i 2\pi{kj}/L} \\
				\hat{b}_{1k}&= \frac1{\sqrt L}\sum_{j=0}^{L-1}\; \hat{a}_{2j+1}\; e^{i 2\pi{kj}/L}
			\end{split}
		\end{equation}
		
		\IT 3) A phase shift of the components of the second subset 
		by $w_N^{k}$, \ $k=0,1,\ldots,L-1$.
		
		\IT 4) N parallel 2-cvQFT (beam-splitter) on the $k$th modes of the subsets, gives
		\begin{equation}
			\begin{split}
				\hat{A}_{k} &= \hat{A}_{k}^+= \frac1{\sqrt 2}(\hat{b}_{0k} + \hat{b}_{1k}) \\
				&=\frac1{\sqrt N} \sum_{j=0}^{L-1}\; (\hat{a}_{2j}\; e^{i 2\pi{k(2j)}/N} + \hat{a}_{2j+1}\; e^{i 2\pi{k(2j+1)}/N})\\
				&=\frac1{\sqrt N} \sum_{j=0}^{N-1}\; \hat{a}_{j}\; e^{i 2\pi{kj}/N}
			\end{split}
		\end{equation}
		\begin{equation}
			\begin{split}
				\hat{A}_{L+k}&= \hat{A}_{k}^- = \frac1{\sqrt 2}(\hat{b}_{0k} - \hat{b}_{1k}) \\
				&=\frac1{\sqrt N} \sum_{j=0}^{L-1}\; (\hat{a}_{2j}\; e^{i 2\pi{k(2j)}/N} - \hat{a}_{2j+1}\; e^{i 2\pi{k(2j+1)}/N})\\
				&=\frac1{\sqrt N} \sum_{j=0}^{L-1}\; (\hat{a}_{2j}\; e^{i 2\pi{(L+k)(2j)}/N} + \hat{a}_{2j+1}\; e^{i 2\pi{(L+k)(2j+1)}/N})\\
				&=\frac1{\sqrt N} \sum_{j=0}^{N-1}\; \hat{a}_{j}\; e^{i 2\pi{(L+k)j}/N}\\
			\end{split}
		\end{equation}
		
		Hence, The final annihilation mode $$\B(A)=[\hat{A}_{0}, \hat{A}_{1}, \ldots,\hat{A}_{N-1}]$$ provides the $N$-cvQFT of the modes $\B(a)$,
		\beq
		\B(A)=\B(W)_N\;\B(a)
		\eeq
		\label{DFT}
	\end{ntheorem}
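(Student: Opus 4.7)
The plan is to show that the composition of the four listed steps implements the Bogoliubov transformation $\B(a) \mapsto \B(W)_N \B(a)$, which, by Definition~1 together with (\ref{A4}), characterizes the $N$-cvQFT uniquely. Since each of the four steps is itself a rotation operator (an index permutation, two parallel $L$-cvQFTs, single-mode phase shifters on a subset, and $L$ parallel 2-cvQFT beam splitters), their composition is again a rotation operator whose unitary matrix is the product of the individual ones. Hence it suffices to track the action of this composition on the annihilation vector $\B(a)$ and to check that the resulting vector coincides with $\B(W)_N \B(a)$.

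I would then carry out the step-by-step computation of the mode operators. Step~1 just reorders $\B(a)$ into $\B(a)_0$ and $\B(a)_1$. Step~2, by the inductive definition of the $L$-cvQFT applied to each subset, outputs the two $L$-point DFTs $\hat{b}_{0k}$ and $\hat{b}_{1k}$ displayed in the statement. Step~3 multiplies $\hat{b}_{1k}$ by the twiddle factor $w_N^{k}=e^{i 2\pi k/N}$, and Step~4 applies to each pair $(\hat{b}_{0k},\,w_N^k \hat{b}_{1k})$ a 2-cvQFT whose unitary is $\B(W)_2 = \frac{1}{\sqrt{2}}\left[\begin{smallmatrix}1 & 1\\ 1 & -1\end{smallmatrix}\right]$, producing $\hat{A}_k^\pm = \frac{1}{\sqrt{2}}(\hat{b}_{0k}\pm w_N^k \hat{b}_{1k})$.

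The core algebraic ingredient is the classical Cooley--Tukey (radix-2) splitting, whose only nontrivial ingredients are the elementary identities $e^{i 2\pi L(2j)/N}=1$ and $e^{i 2\pi L(2j+1)/N}=-1$ for $L=N/2$. Expanding $\hat{A}_k^+$ immediately gives $\frac{1}{\sqrt{N}}\sum_{j=0}^{N-1}\hat{a}_j e^{i 2\pi kj/N}$, as the statement already displays. For $\hat{A}_k^-$, the minus sign of the beam-splitter combines with $e^{i 2\pi L(2j+1)/N}=-1$ to flip the odd contribution back to a plus sign, while $e^{i 2\pi L(2j)/N}=1$ leaves the even contribution untouched; identifying $\hat{A}_{L+k}=\hat{A}_k^-$ then reproduces exactly the $(L+k)$-th row of $\B(W)_N \B(a)$, and the verification is complete.

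The proof is essentially a bookkeeping exercise rather than a genuinely hard argument: the only subtlety is to make sure that the $k$-th mode of $\B(a)_0$ is consistently paired with the $k$-th mode of $\B(a)_1$ when the twiddle phases and beam-splitters are applied, and that the reindexing $\hat{A}_{L+k}\leftarrow \hat{A}_k^-$ matches the lower half of the $N$-point DFT output. No auxiliary lemma beyond Definition~1 and the elementary identity $e^{i\pi(2j+1)}=-1$ is needed, and the same argument iterates to give the advertised $N\log_2 N$-type decomposition when $N$ is a power of~$2$.
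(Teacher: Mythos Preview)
Your proposal is correct and follows essentially the same approach as the paper: the paper's proof is the chain of equalities displayed inside step~4 of the theorem itself, and you reproduce exactly that Cooley--Tukey verification on the annihilation operators, using the same identities $e^{i2\pi L(2j)/N}=1$ and $e^{i2\pi L(2j+1)/N}=-1$. Your only addition is the explicit remark that each step is a rotation operator so that, by Definition~1 and (\ref{A4}), checking the Bogoliubov action on $\B(a)$ suffices --- a point the paper leaves implicit.
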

	The procedure is illustrated in Fig.~\ref{FF43}
	where the cvQFT of order $N=8$, denoted as $\C(F)_8$ is decomposed into two cvQFTs of order $L=4$, denoted as $\C(F)_4$.

	\begin{figure*}[!h]
		\includegraphics[scale=1.0]{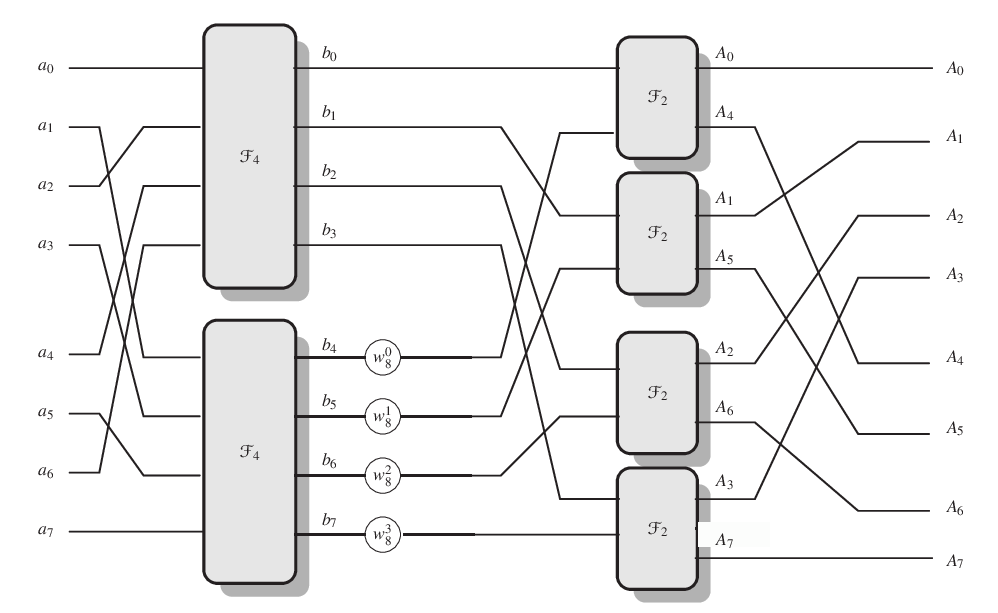}
		\centering
		\caption{Implementation of the 8--cvQFT through 2 cvQFTs of 
		order 4. There are several mode permutations 
		(changing the order of modes), which have no 
		computational complexity.}
		\label{FF43}
	\hrulefill
	\vspace*{0pt}
	\end{figure*}

	\subsection{Iterations of the fast reduction}
	
	The reduction procedure can be iterated. For a given order $N=2^m$, the first 
	iteration gives $\C(F)_N$  expressed through two $\C(F)_{N/2}$, in the second
	iteration the two $\C(F)_{N/2}$ are expressed through four $\C(F)_{N/4}$, 
	and so on.  Finally, at step $m-1$,  the original $\C(F)_N$ is expressed
	through the $\C(F)_2$  DFTs.

	In the final architecture there are several permutations of the connections, but the numerical complexity is confined to the DFTs of order 2, denoted by $\C(F)_2$, and to the
	phase rotations.
	The $\C(F)_2$  matrix is
	\beq
	\B(W)_2=\frac{1}{\sqrt2}\qmatrix{1 &1\cr1&-1\cr}
	\label{H6}
	\eeq
	According to proposition \ref{P2}, it can be implemented by a single beam splitter as in Fig.~\ref{FF278}.
	\begin{figure}[!t]
		\includegraphics[scale=0.9]{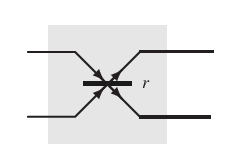}
		\centering
		\caption{Implementation of the 2--cvQFT.}
		\label{FF278}
	\end{figure}

	The global complexity  of the cvQFT of order $N=2^m$ is
	
	\beq
	\frac{N}{2} \log_2(N)\hb{ beam splitters}, \quad \frac{N}{2}  \log_2 \left( \frac{N}{2} \right) \hb{ phase shifters} \,.
	\label{PP4}
	\eeq
	
	In fact the number of BSs is equal to the number of $\C(F)_2$.
	Denoting by $T_N$ the number of beam splitters and phase shifters with the order $N=2^m$, we have the recurrence
	\beq
	T_N=2  T_{N/2}+\frac{N}{2}  \vq N=4,8,16,\ldots
	\eeq
	with $T_2=1$ for the beam splitter and $T_2=0$ for the phase shifter. The solution is indicated in expression.~(\ref{PP4}).  This result should be compared with (\ref{PP2}) related to the Murnaghan  procedure.

	\subsection{Fast implementation for $N=4$}
	In  Fig.~\ref{FF35} a detailed synthesis  of the 4--cvQFT is shown. The comparison with fig.~\ref{FF36} shows the complexity reduction achieved with the fast procedure.

	\begin{figure*}[!h]
		\includegraphics[scale=0.9]{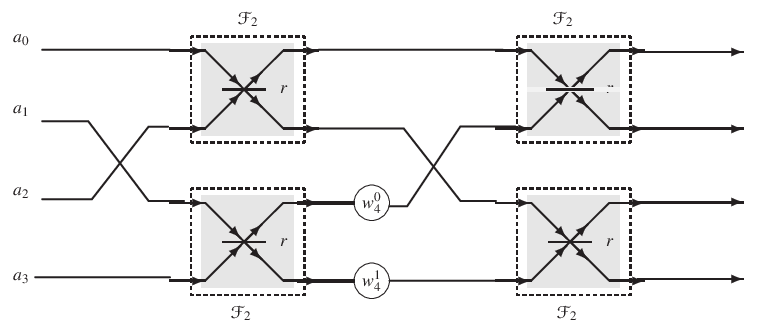}
		\centering
		\caption{Fast implementation of the 4--cvQFT through
		 $\frac{1}{2}N \log (N) = 4$ beam splitters and $\frac{1}{2}N 
		 \log \left( \frac{N}{2} \right) = 2$ phase shifters. Also,
		  there are several mode permutations, 
		  which have no computational complexity.}
	\label{FF35}
	\hrulefill
	\vspace*{0pt}
	\end{figure*}

	\section*{Part II: Applications of cvQFT}

	In this part, the cvQFT will be applied to Gaussian unitaries and to Gaussian states and therefore their formulation is needed. We introduce the main specifications.

	\section{\label{GU} GAUSSIAN UNITARIES AND THE cvQFT}

	\subsection{Gaussian unitaries in the bosonic Hilbert space}

	The Gaussian unitaries can be specified in terms of the cascade
	combination of three fundamental Gaussian unitaries (FGUs).
	The three FGUs 
	are  defined by the following unitary operators, expressed  in terms of the column
	vectors $\B(a)_*$ and  $\B(a)$ of the bosonic operators $a_i^*$ and $a_i$.
	\IT 1) { $N$--mode displacement operator}
	\beq
	D(\BB(\alpha)):=\E^{ \BB(\alpha)^\TT \B(a)_*\,\,-\BB(\alpha)^*\, \B(a)} 
	,\;\;\BB(\alpha)=[\alpha_1,\ldots,\alpha_N]^\TT \,\in \M(C)^N
	\label{U22}
	\eeq
	which is the same as the Weyl operator.
	\IT 2) { $N$--mode rotation operator}
	\beq
	R(\BB(\phi)):=\E^{\,\I\,\, \B(a)^*\BB(\phi) \, \B(a)},\;\;\BB(\phi) 
	\;\;\hb{is a $N\times N$ Hermitian matrix}.
	\label{U24}
	\eeq
	\IT 3) { $N$--mode squeeze operator}
	\beq
	S(\B(z)):=\E^{\met\left[\, \B(a)^* \,\B(z)\, \B(a)_*-\B(a)^\TT\,\B(z)^*\,\B(a)\right]} 
	,\;\;\B(z)\; \hb{is a $N\times N$ symmetric matrix}.
	\label{U26}
	\eeq
	\ET 
	Combination of these operators allows us to get all 
	the Gaussian unitaries. In fact:

	\begin{ntheorem}
		The most general Gaussian unitary is given by the combination of the
		three fundamental Gaussian unitaries $D(\alpha)$, $S(z)$,
		and $R(\phi)$, cascaded in any arbitrary order, that is, 
		$
		S(z)\,D(\alpha)\,R(\phi)\vl
		R(\phi)\,D(\alpha)\,S(z)\vl\hb{etc.}
		$
		\label{P5}
	\end{ntheorem}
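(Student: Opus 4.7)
The plan is to reduce the theorem to the well-known Bloch--Messiah (Euler) decomposition of complex symplectic matrices, and then to dispose of the ordering by explicit conjugation identities among the three FGUs.

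First I would observe that every Gaussian unitary $G$ acts on the vector of bosonic operators through an affine Bogoliubov transformation, i.e.\ through a complex symplectic matrix $\B(S)$ (the linear part) plus a constant shift (the displacement part). This is the defining property of Gaussian unitaries and can be read off directly from the three FGUs: $D(\BB(\alpha))$ contributes only the shift, while $R(\BB(\phi))$ (via (\ref{A4})) and $S(\B(z))$ contribute passive and active linear parts respectively. Conversely, any affine symplectic transformation is realised by some such Gaussian unitary up to a global phase.

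Next I would invoke the Bloch--Messiah decomposition: every complex symplectic $\B(S)$ factors as $\B(U)_1\,\B(\Sigma)\,\B(U)_2$, where $\B(U)_1,\B(U)_2$ are unitary (passive) and $\B(\Sigma)$ is a diagonal squeezing matrix. Lifting this to operators, the linear part of $G$ equals $R(\phi_1)\,S(\Sigma)\,R(\phi_2)$. Then, using the intertwining relation $R(\phi)\,S(\B(z))\,R^{*}(\phi)=S(\E^{\I\BB(\phi)}\B(z)\E^{\I\BB(\phi)^{\TT}})$, the rotation $R(\phi_1)$ can be absorbed to the right past $S(\Sigma)$, producing a single $S(\B(z))\,R(\phi)$ sandwich. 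Reinstating the shift gives a canonical form $G=D(\BB(\alpha))\,S(\B(z))\,R(\BB(\phi))$; this establishes existence in one fixed ordering.

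Finally, to allow any ordering of the three factors, I would verify the three reordering identities
\begin{align*}
R(\BB(\phi))\,D(\BB(\alpha))\,R^{*}(\BB(\phi)) &= D\!\bigl(\E^{\I\BB(\phi)}\BB(\alpha)\bigr),\\
S(\B(z))\,D(\BB(\alpha))\,S^{*}(\B(z)) &= D(\BB(\alpha)'),\\
R(\BB(\phi))\,S(\B(z))\,R^{*}(\BB(\phi)) &= S\!\bigl(\E^{\I\BB(\phi)}\B(z)\E^{\I\BB(\phi)^{\TT}}\bigr),
\end{align*}
each of which follows from the Bogoliubov action (\ref{A4}) and the analogous transformation for $S(\B(z))$. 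Any pair of adjacent factors in the canonical triple can therefore be swapped at the cost of changing the parameters of the displaced or squeezed operator, so every one of the six orderings represents the same family of Gaussian unitaries.

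The main obstacle is the second step: a self-contained proof of the Bloch--Messiah factorisation for a general complex symplectic matrix, which requires the Autonne--Takagi decomposition of the off-diagonal (squeezing) block together with the polar decomposition of the passive block. Given that the remainder of the paper treats the $N$-mode symplectic formalism, I would simply cite this classical result and concentrate the write-up on the conjugation identities, which are what actually licences the phrase ``cascaded in any arbitrary order''.
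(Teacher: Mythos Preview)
The paper does not actually prove this theorem: immediately after the statement it simply records that ``this important theorem was proved by Ma and Rhodes \cite{MaRh90} using Lie's algebra'' and moves on. So there is no in-paper argument to compare against, only the citation and the remark that the original proof is Lie-algebraic.

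Your route via the Bloch--Messiah (Euler) decomposition of the symplectic matrix is a legitimate and well-known alternative to the Ma--Rhodes Lie-algebra proof. The Lie-algebra argument works directly at the level of the exponent, showing that the quadratic-plus-linear bosonic Hamiltonians close under commutation and then disentangling the exponential; your approach instead drops to the finite-dimensional symplectic representation, factors there, and lifts back through the metaplectic correspondence. What your version buys is constructiveness: one reads off $\BB(\phi)$, $\B(z)$, $\BB(\alpha)$ from the Bogoliubov data, and the reordering identities you list are exactly the switching rules (\ref{SR1b})--(\ref{SR1d}) the paper subsequently relies on, so your write-up would dovetail with the rest of Section~\ref{GU}. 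What the Lie-algebra proof buys is that it never needs the metaplectic lift (the step ``any affine symplectic transformation is realised by some Gaussian unitary up to a global phase''), which in your outline is asserted rather than argued; if you keep your approach, that lift is the place where a reference is genuinely required, not only Bloch--Messiah itself.
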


	\ET This important theorem was proved by Ma and Rhodes \cite{MaRh90} using Lie's algebra.
	
	Although the FGUs act on a infinite dimensional Hilbert space, they
	are completely specified by finite dimensional parameters: 
	the displacement operator by the displacement  vector $\BB(\alpha)$,
	the rotation operator by
	the rotation matrix $\BB(\phi)$, and the squeeze operator 
	by the  squeeze matrix $\B(z)$. In the manipulations the squeeze matrix,
	which is complex symmetric, must be decomposed in the polar form \cite{Horn}
	$\B(z)=\B(r)\,\E^{\,\I\, \BB(\theta)}$,
	where $\B(r)$  is Hermitian positive semidefinite (PSD) and $\BB(\theta)$ is Hermitian
	and symmetric.

	Note that in a cascade combination one can switch the
	order of operators with appropriate change in the parameters (switching rules):
	\begin{align}
		S\!\big(\B(z)\big)\,R\!\big(\BB(\phi)\big)
		&= R\!\big(\BB(\phi)\big)\,S\!\big(\B(z)_0\big),
		& \B(z) &= \E^{\I \BB(\phi)}\,\B(z)_0\,
		\E^{\I \BB(\phi)^{\TT}}
		\label{SR1b}
		\\[6pt]
		D\!\big(\BB(\alpha)\big)\,R\!\big(\BB(\phi)\big)
		&= R\!\big(\BB(\phi)\big)\,D\!\big(\BB(\beta)\big),
		& \BB(\alpha) &= \E^{\I \BB(\phi)}\,\BB(\beta)
		\label{SR1c}
		\\[6pt]
		R\!\big(\BB(\theta)\big)\,R\!\big(\BB(\phi)\big)
		&= R\!\big(\BB(\phi)\big)\,R\!\big(\BB(\theta')\big),\quad
		& \BB(\theta') &= \E^{-\I\BB(\phi)}\,\B(\theta)\,
		\E^{\I\BB(\phi)}
		\label{SR1d}
	\end{align}

	The problem is the evaluation of the Bogoliubov matrices in terms of the FGU parameters. For the cascade 
	$D(\BB(\alpha))R(\BB(\phi))S(\B(z))$ shown in Fig.~\ref{FF137}
	The Bogoliubov matrices are given by \cite{MaRh90,CarQC}
	\beq
	\B(E)=\cosh (\B(r))\;\E^{\I\BB(\phi)}\vq \B(F)=\sinh (\B(r))
	\E^{\I\BB(\theta)}\E^{\I\BB(\phi)^\TT}
	\eeq
	With the application of the cvQFT, we have to add the operator
	$R(\BB(\phi)_{\hbt{DFT}})$ 
	at the end of the cascade. The
	switching rule (\ref{SR1c}) allows us to move the cvQFT operator
	before the displacement by modifying
	the displacement vector $\BB(\alpha)$ as
	\beq
	\BB(\alpha)_{\hbt{QFT}}=\E^{\I\phi_{\hbt{DFT}}}\,\BB(\alpha) = \B(W)_N \,\BB(\alpha)
	\label{TT7}
	\eeq
	Then, the switching rule (\ref{SR1d}) allows us to move the cvQFT operator before the rotation by modifying
	the rotation matrix $\BB(\phi)$ as 
	\beq
	\BB(\phi)_{\hbt{QFT}} = \E^{-\I\phi_{\hbt{DFT}}}\,\B(\phi)\,
	\E^{\I\phi_{\hbt{DFT}}} = \B(W)_N^* \,\B(\phi)\,\B(W)_N
	\label{TT8}
	\eeq
	Consequently, the switching rule (\ref{SR1b}) allows us to move the cvQFT operator before the squeezing by modifying
	the squeeze matrix $\B(z)$ as 
	\beq
	\B(z)_{\hbt{QFT}} = \E^{\I \phi_{\hbt{DFT}}}\,\B(z)\,
	\E^{\I \phi_{\hbt{DFT}}^{\TT}} = \B(W)_N \,\B(z)\,\B(W)_N
	\label{TT8}
	\eeq
	
	In conclusion,
	\begin{align}
	\BB(\alpha)_{\hbt{QFT}}(k) & = \frac{1}{\sqrt{N}} \sum_{m=0}^{N-1}  e^{i\frac{2\pi}{N}mk} \alpha_m \\
	\BB(\phi)_{\hbt{QFT}}(k,l) & = \frac{1}{N} \sum_{m,n=0}^{N-1}  e^{i\frac{2\pi}{N}(-mk + nl)} \BB(\phi)_{m,n} \\	
	\B(z)_{\hbt{QFT}}(k,l) & = \frac{1}{N} \sum_{m,n=0}^{N-1}  e^{i\frac{2\pi}{N}(mk + nl)} \B(z)_{m,n} 
	\end{align}

	\begin{figure*}[!t]
		\includegraphics[scale=0.9]{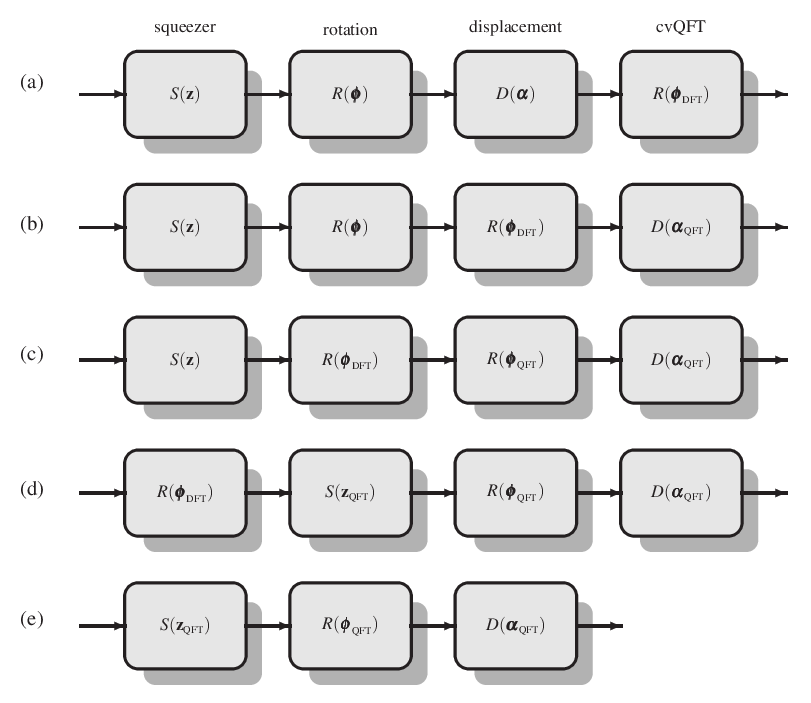}
		\centering
		\caption{(a) Application of the cvQFT  after the 
		cascade of FGUs. 
		(b) The switching rule allows the inversion of the  
		displacement and of the cvQFT. 
		(c) The switching rule allows the inversion of 
		the rotation and of the cvQFT. 
		(d) The switching rule allows the inversion of the 
		squeezing and of the cvQFT. 
		(e) Remove cvQFT rotation for its irrelevance.}
	\label{FF137}
	\end{figure*}

	\begin{nproposition}
		The application of the cvQFT to the end of the cascade of
		 Fig.~\ref{FF137} has the simple effect of modifying
		the displacement vector to its (one dimensional) discrete Fourier transform, the squeeze matrix to its (two dimensional) discrete Fourier transform, and the rotation matrix to a Fourier like transform.
		\label{Q3}
	\end{nproposition}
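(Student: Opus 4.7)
The plan is to combine the three switching rules (\ref{SR1b})--(\ref{SR1d}) with the identity $\E^{\I\phi_{\hbt{DFT}}}=\B(W)_N$ of Definition~\ref{D2}, and to show that commuting $R(\phi_{\hbt{DFT}})$ through the cascade $D(\BB(\alpha))R(\BB(\phi))S(\B(z))$ simply relabels each of the three parameter blocks while leaving the underlying Gaussian state unchanged.

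First I would start from the configuration of Fig.~\ref{FF137}(a), where $R(\phi_{\hbt{DFT}})$ is appended at the output of the cascade acting on $\ket{0}$, and migrate the cvQFT leftward through the three FGUs in order. Rule (\ref{SR1c}), applied with $\BB(\phi)=\phi_{\hbt{DFT}}$, swaps $R(\phi_{\hbt{DFT}})$ with $D(\BB(\alpha))$ and replaces the displacement vector by $\BB(\alpha)_{\hbt{QFT}}=\B(W)_N\BB(\alpha)$ (Fig.~\ref{FF137}(b)). Rule (\ref{SR1d}) swaps it with $R(\BB(\phi))$, conjugating the rotation matrix to $\BB(\phi)_{\hbt{QFT}}=\B(W)_N^{*}\BB(\phi)\B(W)_N$ (Fig.~\ref{FF137}(c)). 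Rule (\ref{SR1b}) swaps it with $S(\B(z))$, replacing the squeeze matrix by $\B(z)_{\hbt{QFT}}=\B(W)_N\B(z)\B(W)_N^{\TT}$ (Fig.~\ref{FF137}(d)).

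Two observations close the argument. First, $\B(W)_N$ is symmetric, since $w_{rs}=w_{sr}$, so $\B(W)_N^{\TT}=\B(W)_N$ and the squeeze update collapses to $\B(W)_N\B(z)\B(W)_N$, i.e., a separable 2D DFT acting on both indices of $\B(z)$. Second, after the three swaps $R(\phi_{\hbt{DFT}})$ acts directly on the vacuum, and since every rotation operator satisfies $R(\BB(\phi))\ket{0}=\ket{0}$ it can be deleted (Fig.~\ref{FF137}(e)). Expanding $\B(W)_N\BB(\alpha)$, $\B(W)_N^{*}\BB(\phi)\B(W)_N$ and $\B(W)_N\B(z)\B(W)_N$ entrywise then reproduces the three summation formulas displayed just above the statement, identifying the updates as a 1D DFT, a Fourier-like similarity transform, and a 2D DFT, respectively.

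The core of the argument is therefore not computational but organizational: the main thing to verify is that the three switching rules are applied with the correct signs in the exponents and in the right order, and that the transpose appearing in the squeeze update is absorbed by the symmetry of $\B(W)_N$. Once this is arranged, the proposition follows from three one-line substitutions followed by the deletion of a rotation acting on the vacuum.
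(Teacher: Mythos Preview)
Your proof is correct and follows exactly the paper's approach: migrate $R(\phi_{\hbt{DFT}})$ leftward through the cascade by applying the switching rules (\ref{SR1c}), (\ref{SR1d}), (\ref{SR1b}) in order, then absorb $\B(W)_N^{\TT}=\B(W)_N$ and discard the residual rotation. One small inaccuracy: Fig.~\ref{FF137} concerns Gaussian \emph{unitaries}, not states, so there is no $\ket{0}$ in that diagram on which the final rotation could act; the paper simply labels the residual $R(\phi_{\hbt{DFT}})$ as ``irrelevant'' to the parameter transformation described in the proposition, reserving the vacuum-invariance argument $R(\BB(\phi))\ket{\B(0)}=\ket{\B(0)}$ for the pure-state version in Fig.~\ref{FF138}.
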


	\subsection{Gaussian unitaries in the phase space}
	
	In the phase space $N$--mode Gaussian  unitaries are specified by
	the symplectic matrix.  There are two versions of symplectic matrices,
	a real version $\B(S)_r$ and complex version $\B(S)_c$  both of order $2N$,
	which verify the symplectic condition
	\beq
	\B(S)_r\BB(\Omega)\B(S)_r^\TT=\BB(\Omega)\vq 
	\B(S)_c\BB(\Omega)\B(S)_c^*=\BB(\Omega)
	\with \BB(\Omega)=\qmatrix{\B(0)&\B(I)\cr-\B(I)&\B(0)\cr}
	\eeq
	where $\B(I)$ is the unitary matrix.
	Here we prefer the complex version because it is simply related to
	Bogoliubov matrices, specifically \cite{Adesso}
	\beq
	\B(S)_c=\qmatrix{\B(E)&\B(F)\cr \ov{\B(F)}&\ov{\B(E)}\cr}
	\label{R3}
	\eeq
	In particular for the cvQFT, where $\B(E)=\B(W)_N$ and
	$\B(F)=\B(0)$, we find the block diagonal form
	\beq
	\B(S)_W=\qmatrix{\B(W)_N&\B(0)\cr\B(0)&\ov{\B(W)}_N\cr}
	\label{R4}
	\eeq
	Now it is easy to find the effect of the cvQFT on the symplectic matrix,
	namely
	\beq
		\B(S)_c^{\hbt{QFT}}=\B(S)_W\,\B(S)_c
	\label{R6}
	\eeq
	
	\subsection{Example of application}
	
	We consider as an example of application a Gaussian
	unitary related to a Gaussian state discussed by several authors 
	\cite{vanLoock2002,vanLoock2003,Giedke2001} in the context of 
	continuous pure states with
	interesting forms of entanglement. In the cited papers general $N$ 
	mode states are considered. As a particular case we
	consider a four-mode state generated by a Gaussian unitary 
	characterized by Bogoliubov matrices
	\beq
	\B(E)=\left[
	\begin{array}{cccc}
		u & v & v & v \\
		v & u & v & v \\
		v & v & u & v \\
		v & v & v & u \\
	\end{array}\right]
	\vq
	\B(F)=\left[
	\begin{array}{cccc}
		x & y & y & y \\
		y & x & y & y \\
		y & y & x & y \\
		y & y & y & x \\
	\end{array}\right]
	\eeq
	where
	\beq
	u=\frac{1}{4} (c_1+3 c_2)\vq v=\frac{1}{4} (c_1-c_2)
	\eeq
	\beq
	x=\frac{1}{4} (s_1-3 s_2)\vq y=\frac{1}{4} (s_1+s_2)
	\eeq
	and $c_i = \cosh(r_i)$ and $s_i = \sinh(r_i)$.
	The authors do not give the expression of the squeeze matrix.
	We find
	\beq
	\B(r)= \frac{1}{4}\left[
	\begin{array}{cccc}
		r_1+3 r_2 &r_1-r_2 &r_1-r_2 &r_1-r_2 \\
		r_1-r_2 &r_1+3 r_2 &r_1-r_2 &r_1-r_2 \\
		r_1-r_2 &r_1-r_2 &r_1+3 r_2 &r_1-r_2 \\
		r_1-r_2 &r_1-r_2 &r_1-r_2 &r_1+3 r_2 \\
	\end{array}\right]
	\vq\eeq\beq
	e^{i\BB(\theta)}=\met\left[
	\begin{array}{cccc}
		-1 & 1 & 1 & 1 \\
		1 & -1 & 1 & 1 \\
		1 & 1 & -1 & 1 \\
		1 & 1 & 1 & -1 \\
	\end{array}\right] 
	\eeq
	\beq
	\B(z)=\B(r)\;e^{i\BB(\theta)}=\frac14\left[
	\begin{array}{cccc}
		r_1-3 r_2 &r_1+r_2 &r_1+r_2 &r_1+r_2 \\
		r_1+r_2 &r_1-3 r_2 &r_1+r_2 &r_1+r_2 \\
		r_1+r_2 &r_1+r_2 &r_1-3 r_2 &r_1+r_2 \\
		r_1+r_2 &r_1+r_2 &r_1+r_2 &r_1-3 r_2 \\
	\end{array}\right] 
	\label{R70}
	\eeq
	The complex symplectic matrix is given by Eq. (\ref{example1}) and it is modified by the cvQFT as Eq. (\ref{example2}).

	\begin{figure*}[!t]
		\beq \label{example1}
		\B(S)_c=\frac14
		\scalebox{1.0}{$
			\left[
			\begin{array}{cccccccc}
				c_1+3 c_2& c_1-c_2& c_1-c_2& c_1-c_2& s_1-3 s_2 & s_1+s_2 & s_1+s_2 & s_1+s_2 \\
				c_1-c_2& c_1+3 c_2& c_1-c_2& c_1-c_2& s_1+s_2 & s_1-3 s_2 & s_1+s_2 & s_1+s_2 \\
				c_1-c_2& c_1-c_2& c_1+3 c_2& c_1-c_2& s_1+s_2 & s_1+s_2 & s_1-3 s_2 & s_1+s_2 \\
				c_1-c_2& c_1-c_2& c_1-c_2& c_1+3 c_2& s_1+s_2 & s_1+s_2 & s_1+s_2 & s_1-3 s_2 \\
				s_1-3 s_2 & s_1+s_2 & s_1+s_2 & s_1+s_2 & c_1+3 c_2& c_1-c_2& c_1-c_2& c_1-c_2\\
				s_1+s_2 & s_1-3 s_2 & s_1+s_2 & s_1+s_2 & c_1-c_2& c_1+3 c_2& c_1-c_2& c_1-c_2\\
				s_1+s_2 & s_1+s_2 & s_1-3 s_2 & s_1+s_2 & c_1-c_2& c_1-c_2& c_1+3 c_2& c_1-c_2\\
				s_1+s_2 & s_1+s_2 & s_1+s_2 & s_1-3 s_2 & c_1-c_2& c_1-c_2& c_1-c_2& c_1+3 c_2\\
			\end{array}\right]
			$}
		\eeq
		
		\hrulefill
		\vspace*{0pt}
	\end{figure*}

	\begin{figure*}[!t]
		\beq \label{example2}
		\B(S)_c^{\hbt{QFT}} =\B(S)_W\,\B(S)_c =
		\scalebox{1.0}{$
			\left[
			\begin{array}{cccccccc}
				c_1 & c_1 & c_1 & c_1 & s_1 & s_1 & s_1 & s_1 \\
				c_2 & i c_2 & -c_2 & -i c_2 & -s_2 & -i s_2 & s_2 & i s_2 \\
				c_2 & -c_2 & c_2 & -c_2 & -s_2 & s_2 & -s_2 & s_2 \\
				c_2 & -i c_2 & -c_2 & i c_2 & -s_2 & i s_2 & s_2 & -i s_2 \\
				s_1 & s_1 & s_1 & s_1 & c_1 & c_1 & c_1 & c_1 \\
				-s_2 & i s_2 & s_2 & -i s_2 & c_2 & -i c_2 & -c_2 & i c_2 \\
				-s_2 & s_2 & -s_2 & s_2 & c_2 & -c_2 & c_2 & -c_2 \\
				-s_2 & -i s_2 & s_2 & i s_2 & c_2 & i c_2 & -c_2 & -i c_2 \\
			\end{array}\right]
			$}
		\eeq
		
		\hrulefill
		\vspace*{0pt}
	\end{figure*}

	\section{\label{BG} GAUSSIAN STATES AND THE cvQFT}

	Gaussian unitaries  applied to ground state provide pure Gaussian states
	and applied to thermal states provides mixed states.
	
	\subsection{Gaussian states in the bosonic Hilbert space}

	\evv{1. \q Effect of the cvQFT on pure Gaussian states}
	
	Theorem \ref{P5} states that the most general Gaussian unitary is given by the
	combination of the three fundamental Gaussian unitaries. For pure Gaussian
	states we have:

	\begin{ntheorem}  	
		The most general $N$--mode pure Gaussian state is obtained from
		the $N$ replica of the vacuum state $\ket{\B(0)}$ as
		$$ 
		\ket{\BB(\alpha),\B(z)}:=D(\BB(\alpha))S(\B(z))\ket{\B(0)}  \label{BB2}
		$$
		\label{T2}
	\end{ntheorem}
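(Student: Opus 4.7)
The plan is to combine Theorem~\ref{P5} (Ma--Rhodes) with the fact that the rotation operator leaves the vacuum invariant, and then use the switching rules (\ref{SR1b})--(\ref{SR1d}) to move the irrelevant rotation next to the vacuum.

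First I would invoke Theorem~\ref{P5}: any $N$--mode Gaussian unitary $U$ can be written as a cascade of the three FGUs $D(\BB(\alpha))$, $R(\BB(\phi))$, $S(\B(z))$ in any prescribed order. Choosing the order $U = D(\BB(\alpha))\,S(\B(z))\,R(\BB(\phi))$ (using, if necessary, the switching rules (\ref{SR1b}) and (\ref{SR1c}) to push the rotation to the rightmost position, which only modifies the parameters $\BB(\alpha)$ and $\B(z)$ in a closed-form way specified by those rules), every pure Gaussian state has the form $U\ket{\B(0)} = D(\BB(\alpha))\,S(\B(z))\,R(\BB(\phi))\ket{\B(0)}$.

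The second step is to observe that $R(\BB(\phi))\ket{\B(0)} = \ket{\B(0)}$. This is because $R(\BB(\phi)) = \E^{\I\,\B(a)^*\BB(\phi)\B(a)}$ is the exponential of an operator which is a polynomial in the annihilation operators $\B(a)$ acting to the right, and every term $a_j$ kills the vacuum. More formally, expanding the exponential in its power series, every term of order $\geq 1$ contains at least one factor $\B(a)$ on the right, which gives zero on $\ket{\B(0)}$, so only the $k=0$ term survives and yields $\ket{\B(0)}$. Substituting this back into the cascade eliminates the rotation, leaving $U\ket{\B(0)} = D(\BB(\alpha))\,S(\B(z))\ket{\B(0)}$, which matches the definition of $\ket{\BB(\alpha),\B(z)}$.

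Finally I would argue the converse: every pure Gaussian state arises this way. Since pure Gaussian states are defined as $U\ket{\B(0)}$ for some Gaussian unitary $U$, and since $D(\BB(\alpha))S(\B(z))$ is itself a Gaussian unitary, the parametrization by $(\BB(\alpha),\B(z)) \in \M(C)^N \times \{N\times N \text{ symmetric}\}$ is both complete and faithful; completeness has been established by the argument above, and faithfulness follows from the fact that $\BB(\alpha)$ and $\B(z)$ are uniquely recovered from the first and second moments of the state. The only delicate point is the reordering in the first step --- the switching rules (\ref{SR1b})--(\ref{SR1c}) must be applied in the correct sequence so that no rotation operator is left between $D$ and $S$ or to the left of $D$; once the rotation sits rightmost, the vacuum absorbs it, and the theorem follows.
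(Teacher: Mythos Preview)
Your proposal is correct and follows exactly the line the paper sketches: the paper does not give a formal proof environment for this theorem but immediately remarks that ``the reason of the absence of the rotation \ldots\ is due to the fact that $R(\BB(\phi))\ket{\B(0)}=\ket{\B(0)}$,'' which together with Theorem~\ref{P5} is the whole argument. Your write-up simply fills in the details (power-series justification of the vacuum invariance, explicit use of the switching rules to place $R$ rightmost), so there is nothing substantively different to compare.
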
 
	
	The reason of the absence of the rotation in theorem~\ref{BB2} is due to the fact
	that the application of the vacuum state to the rotation operator
	gives back the vacuum state itself, that is, 
	$R(\BB(\phi))\ket{\B(0)}=\ket{\B(0)}$. The statement legitimates
	to denote
	$\ket{\BB(\alpha),\B(z)}$ as a general pure Gaussian state and therefore the specification is confined to the $N$ vector $\BB(\alpha)$ and to the $N\times N$ symmetric  matrix $\B(z)$.
	
	With the application of the cvQFT one finds:
	
	\begin{nproposition}
		The application of the cvQFT 
		modified a pure Gaussian state as 
		\beq
		\ket{\BB(\alpha),\B(z)} \arrcvQFT
		\ket{\BB(\alpha_{\hbt{QFT}}),\B(z)_{\hbt{QFT}}}
		\label{BB4}
		\eeq
		where
		\beq
		\BB(\alpha)_{\hbt{QFT}}=e^{i\BB(\phi)_{\hbt{DFT}}}\BB(\alpha)
		\vq\B(z)_{\hbt{QFT}}=e^{i\BB(\phi)_{\hbt{DFT}}}\,\B(z)\,
		e^{i\BB(\phi)^\TT_{\hbt{DFT}}}
		\label{BBAa}
		\eeq
		\label{P88}
	\end{nproposition}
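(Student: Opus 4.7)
The plan is to compute $R(\BB(\phi)_{\hbt{DFT}})\,D(\BB(\alpha))\,S(\B(z))\,\ket{\B(0)}$ directly by pushing the rotation through the displacement and the squeezing using the switching rules \eqref{SR1b} and \eqref{SR1c}, and then invoking the rotational invariance of the vacuum. The only truly non-mechanical step will be keeping the conjugation/transpose bookkeeping straight, which is the main place an error could creep in.

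First, I would rewrite the switching rule \eqref{SR1c} in the form needed here. That rule gives $D(\BB(\alpha))\,R(\BB(\phi)) = R(\BB(\phi))\,D(\BB(\beta))$ with $\BB(\alpha)=\E^{\I\BB(\phi)}\BB(\beta)$, i.e.\ $R^{*}\,D(\BB(\alpha))\,R = D(\E^{-\I\BB(\phi)}\BB(\alpha))$. Conjugating and relabeling, this is equivalent to
\beq
R(\BB(\phi)_{\hbt{DFT}})\,D(\BB(\alpha)) = D\!\big(\E^{\I\BB(\phi)_{\hbt{DFT}}}\BB(\alpha)\big)\,R(\BB(\phi)_{\hbt{DFT}}) = D(\BB(\alpha)_{\hbt{QFT}})\,R(\BB(\phi)_{\hbt{DFT}})
\eeq
which exactly produces the first formula in \eqref{BBAa}, given that $\E^{\I\BB(\phi)_{\hbt{DFT}}}=\B(W)_N$.

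Next, I would recast \eqref{SR1b} in the same way to move the rotation past the squeeze. The rule $S(\B(z))\,R(\BB(\phi)) = R(\BB(\phi))\,S(\B(z)_0)$ with $\B(z)=\E^{\I\BB(\phi)}\B(z)_0\E^{\I\BB(\phi)^\TT}$ is equivalent to $R\,S(\B(z))\,R^{*}=S(\E^{\I\BB(\phi)}\B(z)\E^{\I\BB(\phi)^\TT})$, hence
\beq
R(\BB(\phi)_{\hbt{DFT}})\,S(\B(z)) = S\!\big(\E^{\I\BB(\phi)_{\hbt{DFT}}}\B(z)\,\E^{\I\BB(\phi)_{\hbt{DFT}}^{\TT}}\big)\,R(\BB(\phi)_{\hbt{DFT}}) = S(\B(z)_{\hbt{QFT}})\,R(\BB(\phi)_{\hbt{DFT}})
\eeq
giving the second formula in \eqref{BBAa}.

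Combining the two, the cvQFT acts as
\beq
R(\BB(\phi)_{\hbt{DFT}})\,D(\BB(\alpha))\,S(\B(z))\,\ket{\B(0)} = D(\BB(\alpha)_{\hbt{QFT}})\,S(\B(z)_{\hbt{QFT}})\,R(\BB(\phi)_{\hbt{DFT}})\,\ket{\B(0)}.
\eeq
The proof concludes by invoking the fact already noted in the paper (just after Theorem~\ref{T2}) that any rotation operator fixes the vacuum, $R(\BB(\phi))\ket{\B(0)}=\ket{\B(0)}$. Hence the right-hand side equals $D(\BB(\alpha)_{\hbt{QFT}})\,S(\B(z)_{\hbt{QFT}})\,\ket{\B(0)} = \ket{\BB(\alpha)_{\hbt{QFT}},\B(z)_{\hbt{QFT}}}$, proving \eqref{BB4}. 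The only potentially subtle point is verifying that the conjugations by $R(\BB(\phi)_{\hbt{DFT}})$ act on $\B(z)$ with the transpose pattern $\B(W)_N\,\B(z)\,\B(W)_N$ rather than with an adjoint on one side; this follows because $\BB(\phi)_{\hbt{DFT}}$ is Hermitian so $\E^{\I\BB(\phi)_{\hbt{DFT}}}$ is unitary, and the squeeze switching rule has a transpose (not an adjoint) on the right factor, matching exactly the symmetric bilinear $\B(a)^\TT\B(z)^*\B(a)$ appearing in \eqref{U26}.
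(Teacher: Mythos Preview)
Your proof is correct and follows essentially the same route as the paper: apply the switching rule \eqref{SR1c} to move $R(\BB(\phi)_{\hbt{DFT}})$ past $D(\BB(\alpha))$, then \eqref{SR1b} to move it past $S(\B(z))$, and finally use $R(\BB(\phi))\ket{\B(0)}=\ket{\B(0)}$ to absorb the remaining rotation into the vacuum. Your explicit check that the transpose (rather than adjoint) appears on the right of the squeeze transformation is a nice touch that the paper leaves implicit.
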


	\begin{figure}[!t]
		\includegraphics[scale=0.8]{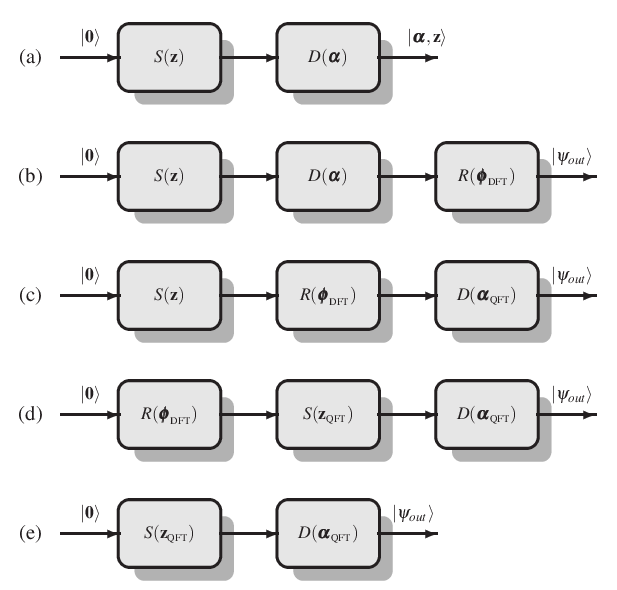}
		\centering
		\caption{(a) Generation of the Gaussian state from the vacuum 
			state $\ket{\B(0)}$.
			(b)  Introduction  of the  cvQFT  through the operator
			$R(\BB(\phi)_{\hbt{DFT}})$ with consequent modification 
			of the output state.
			(c) Inversion of displacement and rotation  with 
			modification of displacement.
			(d) Inversion of squeezing and rotation with m
			odification of squeezing.
			(e) Remove rotation for its irrelevance.}
		\label{FF138}
	\end{figure}

	We follow Fig.~\ref{FF138}. In (a) the generation of a
	standard pure Gaussian state. In (b) the introduction of the operator $R(\BB(\Phi)_{\hbt{DFT}})$ which provides  the cvQFT. In (c) the
	inversion of displacement and rotation using the switching rule of Eq. (\ref{SR1c}), thus the displacement vector becomes
	\beq
	\BB(\alpha)_{\hbt{QFT}}=e^{i \BB(\phi)_{\hbt{DFT}}}\BB(\alpha) =\B(W)_N \BB(\alpha)
	\eeq
	In (d) the inversion of squeezing and rotation using the switching rule of Eq. (\ref{SR1b}), thus the squeeze matrix  becomes (considering that
	the DFT matrix is symmetric)
	\beq
	\B(z)_{\hbt{QFT}}=e^{i\BB(\phi)_{\hbt{DFT}}}\,\B(z)\,e^{i\BB(\phi)^\TT_{\hbt{DFT}}}=
	\B(W)_N\B(z)\B(W)_N
	\eeq

	\evv{2. \q Effect on mixed Gaussian states}
	
	Williamson’s  theorem provided the generation of mixed Gaussian states starting from thermal noise \cite{CarQC}:

	\begin{ntheorem} 
		The most general $N$--mode Gaussian state is generated
		from thermal state by application of the three 
		fundamental unitaries as
		\beq
		\rho(\BB(\alpha),\BB(\phi),\B(z)|\B(V)^\oplus)= U(\BB(\alpha),\BB(\phi),\B(z))\;\rho_{\hbf{th}}(\B(V)^\oplus)\;
		U^*(\BB(\alpha),\BB(\phi),\B(z))      
		\label{W12}
		\eeq
		where
		\beq
		U(\BB(\alpha),\BB(\phi),\B(z))=
		D(\BB(\alpha))\,R(\BB(\phi))\,S(\B(z))
		\eeq
		and $\rho_{\hbf{th}}(\B(V)^\oplus)$ is an $N$--mode thermal noise. 
		\label{H4}
	\end{ntheorem}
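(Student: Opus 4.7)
The plan is to reduce the claim to Williamson's symplectic diagonalization theorem combined with Theorem \ref{P5} (Ma--Rhodes). Williamson handles the covariance matrix, and Ma--Rhodes takes care of arranging the three fundamental Gaussian unitaries in the prescribed cascade order.

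First I would pass to the phase space, where an arbitrary $N$--mode Gaussian state $\rho$ is completely specified by its mean vector $\BB(\mu)$ and covariance matrix $\B(V)$. Williamson's theorem asserts the existence of a (complex) symplectic matrix $\B(S)$ of order $2N$ such that $\B(V) = \B(S)\,\B(V)^\oplus\,\B(S)^*$, where $\B(V)^\oplus$ is the direct--sum thermal covariance whose blocks are fixed by the symplectic eigenvalues of $\B(V)$; the existence of the requisite thermal state $\rho_{\hbf{th}}(\B(V)^\oplus)$ is immediate from this normal form.

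Second I would lift this phase--space statement to the Hilbert space $\C(H)^N$. The metaplectic representation associates to $\B(S)$ a linear (displacement--free) Gaussian unitary $U_S$, unique up to a global phase, such that $\rho_0 = U_S\,\rho_{\hbf{th}}(\B(V)^\oplus)\,U_S^*$, where $\rho_0$ is the zero--mean Gaussian state with the same covariance as $\rho$. Composing with a displacement $D(\BB(\alpha))$ that produces the target mean $\BB(\mu)$ yields $\rho = [D(\BB(\alpha))\,U_S]\,\rho_{\hbf{th}}(\B(V)^\oplus)\,[D(\BB(\alpha))\,U_S]^*$. The bracketed factor is itself a Gaussian unitary, so Theorem \ref{P5} applies and rearranges it as $D(\BB(\alpha'))\,R(\BB(\phi))\,S(\B(z))$ for suitable parameters. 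Relabelling $\BB(\alpha')$ as $\BB(\alpha)$ produces the form (\ref{W12}).

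The main obstacle I foresee is the explicit matching between the Bloch--Messiah (Euler) factorization of $\B(S)$ underlying Theorem \ref{P5} and the product $R(\BB(\phi))\,S(\B(z))$ of only two fundamental unitaries. The Bloch--Messiah decomposition yields three symplectic factors (passive--active--passive), so one must transport the right passive factor past the squeezing via the switching rule (\ref{SR1b}) and then amalgamate it with the left passive factor using (\ref{SR1d}). One must verify along the way that the amalgamated phase matrix remains Hermitian and that the transported squeeze matrix remains complex symmetric, so that the resulting $(\BB(\phi),\B(z))$ lie in the legitimate parameter domain of the FGUs (\ref{U24})--(\ref{U26}). Once this bookkeeping is settled, the remainder of the proof is purely algebraic, with the analytic content concentrated in Williamson's diagonalization, which is a standard result in symplectic linear algebra.
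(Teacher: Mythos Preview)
The paper does not actually prove this theorem; it merely introduces it with the remark that ``Williamson's theorem provided the generation of mixed Gaussian states starting from thermal noise'' and a citation to \cite{CarQC}, then moves on. Your proposal is correct and is precisely the standard argument behind that one--line attribution: Williamson for the covariance matrix, the metaplectic lift plus a displacement for the Hilbert--space realisation, and Ma--Rhodes (Theorem~\ref{P5}) to arrange the resulting Gaussian unitary in the $D\,R\,S$ order. The Bloch--Messiah bookkeeping you flag is genuine but resolves exactly as you indicate via the switching rules (\ref{SR1b}) and (\ref{SR1d}): the product of two rotation operators is again a rotation operator (their unitary matrices multiply), and the transported squeeze matrix $e^{-\I\BB(\phi)}\,\B(r)\,e^{-\I\BB(\phi)^{\TT}}$ remains complex symmetric, so the parameters land in the admissible domain of (\ref{U24})--(\ref{U26}).
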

	With the application of the cvQFT one finds:

	\begin{nproposition} 
		The application of the cvQFT 
		modified a mixed Gaussian state as 
		\beq
		\rho(\BB(\alpha),\BB(\phi),\B(z)|\B(V)^\oplus) \arrcvQFT
		\rho(\BB(\alpha)_{\hbt{QFT}},\BB(\phi)_{\hbt{QFT}},\B(z)_{\hbt{QFT}}|\B(V)^\oplus)
		\label{BB4}
		\eeq
		where
		\beq
		\BB(\alpha)_{\hbt{QFT}}=\B(W)_N\BB(\alpha)\vq
		\BB(\phi)_{\hbt{QFT}}=\B(W)_N^* \BB(\phi) \B(W)_N \vq
		\B(z)_{\hbt{QFT}} = \B(W)_N \,\B(z)\,\B(W)_N
		\eeq
		\label{P9mixed8}
	\end{nproposition}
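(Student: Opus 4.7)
The plan is to mirror the pure-state argument of Proposition~\ref{P88}, replacing the vacuum by the thermal state $\rho_{\hbf{th}}(\B(V)^\oplus)$. Starting from Theorem~\ref{H4}, the cvQFT acts as $\rho \mapsto R(\BB(\phi)_{\hbt{DFT}})\,\rho\,R^*(\BB(\phi)_{\hbt{DFT}})$. I would insert the identity $R^*(\BB(\phi)_{\hbt{DFT}})\,R(\BB(\phi)_{\hbt{DFT}}) = I$ between each adjacent pair of operators in the cascade $U(\BB(\alpha),\BB(\phi),\B(z)) = D(\BB(\alpha))\,R(\BB(\phi))\,S(\B(z))$ (and symmetrically in $U^*$). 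This produces three conjugation pairs that can be absorbed one after another by the switching rules (\ref{SR1c}), (\ref{SR1d}), and (\ref{SR1b}), each replacing one cascade parameter with its cvQFT-transformed version: $\BB(\alpha)\mapsto \B(W)_N\,\BB(\alpha)$, $\BB(\phi)\mapsto \B(W)_N^*\,\BB(\phi)\,\B(W)_N$, and $\B(z)\mapsto \B(W)_N\,\B(z)\,\B(W)_N$, matching the formulas stated in the proposition.

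After performing all three switchings on each side, what remains sandwiching $\rho_{\hbf{th}}(\B(V)^\oplus)$ is an innermost pair $R(\BB(\phi)_{\hbt{DFT}})\,\rho_{\hbf{th}}(\B(V)^\oplus)\,R^*(\BB(\phi)_{\hbt{DFT}})$. The crucial step, which is the analogue of $R(\BB(\phi))\,\ket{\B(0)} = \ket{\B(0)}$ used in the pure-state case, is to show that this innermost sandwich leaves the thermal state invariant. I expect this to be the main obstacle and would argue it in the phase space: the $N$-mode thermal state is fully specified by its covariance $\B(V)^\oplus$, and the cvQFT acts on covariance matrices through the block-diagonal symplectic $\B(S)_W$ of Eq.~(\ref{R4}) as $\B(V)^\oplus \mapsto \B(S)_W\,\B(V)^\oplus\,\B(S)_W^{*}$. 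Since $\B(S)_W$ is block-diagonal with unitary blocks $\B(W)_N$ and $\ov{\B(W)}_N$, a thermal covariance of the standard form (a direct sum of identical single-mode thermal blocks) is manifestly preserved; equivalently, $\rho_{\hbf{th}}$ is diagonal in the Fock basis and depends only on number operators, which commute with all passive (rotation) unitaries.

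Once the innermost rotations cancel by this invariance, the expression collapses to $U(\BB(\alpha)_{\hbt{QFT}},\BB(\phi)_{\hbt{QFT}},\B(z)_{\hbt{QFT}})\,\rho_{\hbf{th}}(\B(V)^\oplus)\,U^*(\BB(\alpha)_{\hbt{QFT}},\BB(\phi)_{\hbt{QFT}},\B(z)_{\hbt{QFT}})$, which by Theorem~\ref{H4} is exactly $\rho(\BB(\alpha)_{\hbt{QFT}},\BB(\phi)_{\hbt{QFT}},\B(z)_{\hbt{QFT}}\,|\,\B(V)^\oplus)$. The accompanying diagram would be the direct mixed-state analogue of Fig.~\ref{FF138}, with the vacuum ket replaced by $\rho_{\hbf{th}}$ and the final \emph{"remove rotation for its irrelevance"} step justified by the thermal invariance established above. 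The only conceptual novelty over Proposition~\ref{P88} is this invariance step; the parameter-update formulas themselves come for free from exactly the same switching rules.
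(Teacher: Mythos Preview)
Your proposal is correct and follows essentially the same route as the paper: successively commute $R(\BB(\phi)_{\hbt{DFT}})$ past $D$, $R$, and $S$ via the switching rules (\ref{SR1c}), (\ref{SR1d}), (\ref{SR1b}), then discard the residual rotation acting on the thermal state (the paper's Fig.~\ref{FF139}, step (f), ``remove rotation for its irrelevance''). You are in fact more explicit than the paper about justifying this last step, though note that your phase-space argument tacitly assumes the single-mode thermal blocks in $\B(V)^\oplus$ are identical.
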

	
	\begin{proof}
		We follow Fig.~\ref{FF139}.
		In (a) the generation of the Gaussian state from the thermal state 
		$\rho(\B(V)^{\otimes})$.
		In (b)  the introduction  of the  cvQFT  through the operator $R(\BB(\phi)_{\hbt{DFT}})$ with consequent modification of the output state.
		In (c) the inversion of the displacement with modification of displacement 
		\beq
		\BB(\alpha)_{\hbt{QFT}}=\B(W)_N\BB(\alpha)
		\eeq
		In (d) the inversion of the rotation with modification of rotation matrix  
		\beq
		\BB(\phi)_{\hbt{QFT}}=\B(W)_N^* \BB(\phi) \B(W)_N
		\eeq
		In (e) the inversion of the squeezing with modification of squeezing matrix 
		\beq
		\B(z)_{\hbt{QFT}} = \B(W)_N \,\B(z)\,\B(W)_N
		\eeq

		\begin{figure*}[!t]
			\includegraphics[scale=0.9]{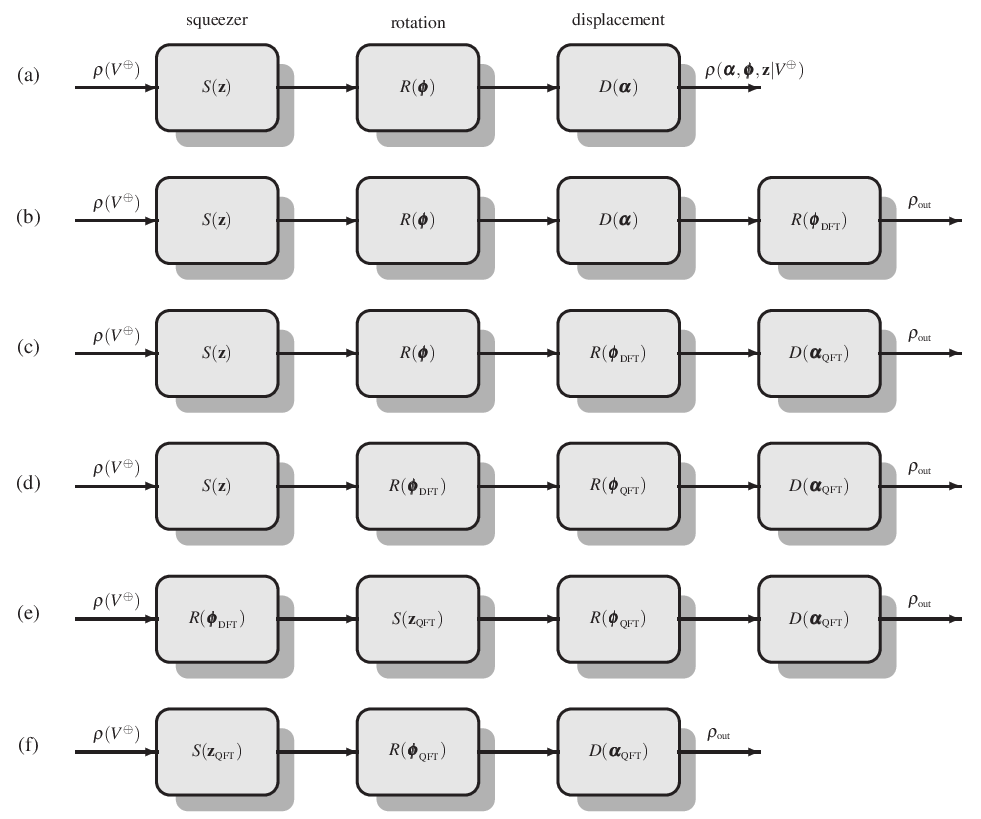}
			\centering
			\caption{(a) Generation of the Gaussian state 
			from the thermal state $\rho_{th}$.
				(b)  Introduction  of the  cvQFT  
				through the operator $R(\BB(\phi)_{\hbt{DFT}})$ 
				with consequent modification of the 
				output state.
				(c) Inversion of displacement and rotation 
				with modification of displacement.
				(d) Inversion of two rotations with 
				modification of the rotation.
				(e) Inversion of squeezing and rotation 
				with modification of squeezing.
				(f) Remove rotation for its irrelevance.}
			\label{FF139}
		\vspace*{3pt}
		\hrulefill
		\vspace*{3pt}
		\end{figure*}

	\end{proof}

	\subsection{Gaussian states in the phase space}
	In the phase space $N$--mode Gaussian states are completely
	described by the mean vector $\B(m)$
	and the covariance matrix (CM) $\B(V)$, where $\B(m)$ is a vector of size 
	$2N$  and $\B(V)$ is a matrix
	of order $2N$. Also, for the CM we may have
	a real and complex version related by the unitary matrix $L=\frac{1}{\sqrt{2}} \qmatrix{\B(I)_N&\B(I)_N\cr -i\B(I)_N&\B(I)_N\cr}$, but in
	this case we find more convenient the real version.
	
	After a transformation with (real) symplectic matrix $\B(S)_r$
	the mean value and the (real) covariance matrix become
	\beq
	\B(m) \mapsto \B(S)_r\,\B(m)\vq \B(V) \mapsto \B(S)_r\,\B(V)\,\B(S)^\TT_r
	\label{Z49}
	\eeq
	A pure Gaussian state is generated from the vacuum state $\ket{\B(0)}$
	and its CM becomes
	\beq
	\B(V)= \B(S)_r\B(V)_0\B(S)_r^\TT=\B(S)_r\B(S)_r^\TT  
	\label{R20}
	\eeq
	in consideration of the fact that the CM of the ground state $\B(V)_0$
	is the identity. A mixed  Gaussian state is generated from the thermal state $\rho_{\hbf{th}}$
	and its CM becomes
	\beq
	\B(V)= \B(S)_r\B(V)_{\hbf{th}}\B(S)_r^\TT  
	\label{R22}
	\eeq

	Thus, the effect of the cvQFT on the CM $\B(V)$ results in
	\beq
		\B(V)_{\hbt{QFT}} = \B(S)_{Wr}\B(V)\B(S)_{Wr}^\TT \label{R24}
	\eeq
	where $\B(S)_{Wr}=\B(L)\B(S)_W\B(L)^*$.
	Note that Eq. (\ref{R24}) holds for both pure and mixed Gaussian states.

	\subsection{Example of application (cont.)}
	
	It is interesting to see the effect of the cvQFT on the squeeze matrix given by Eq. (\ref{R70}). We find
	\beq
	\B(z)_{\hbt{QFT}}=\B(W)_N\B(z)\B(W)_N=\frac14\left[
	\begin{array}{cccc}
		r_1 & 0 & 0 & 0 \\
		0 & 0 & 0 & -r_2 \\
		0 & 0 & -r_2 & 0 \\
		0 & -r_2 & 0 & 0 \\
	\end{array}\right]
	\eeq
	The polar decomposition 
	$\B(z)_{\hbt{QFT}}= \B(r)_{\hbt{QFT}}e^{i\BB(\theta)_{\hbt{QFT}}}$
	gives
	\beq
	\B(r)_{\hbt{QFT}}=\left(\B(z)_{{\hbt{QFT}}}^*\B(z)_{\hbt{QFT}}\right)^\met
	=\frac14\left[
	\begin{array}{cccc}
		r_1 & 0 & 0 & 0 \\
		0 & r_2 & 0 & 0 \\
		0 & 0 & r_2 & 0 \\
		0 & 0 & 0 & r_2 \\
	\end{array}\right]
	\vq\eeq\beq
	e^{i\BB(\theta)_{\hbt{QFT}}}
	=\left[
	\begin{array}{cccc}
		1 & 0 & 0 & 0 \\
		0 & 0 & 0 & -1 \\
		0 & 0 & -1 & 0 \\
		0 & -1 & 0 & 0 \\
	\end{array}\right]
	\eeq
	
	The (real) CM results in Eq. (\ref{example3}).
	\begin{figure*}[!t]
		\beq \label{example3}
		\B(V)=
		\scalebox{0.85}{$
			\left[
			\begin{array}{cccccccc}
				e^{2 r_1}+3 e^{-2 r_2} & e^{2 r_1}-e^{-2 r_2} & e^{2 r_1}-e^{-2 r_2} & e^{2 r_1}-e^{-2 r_2} & 0 & 0 & 0 & 0 \\
				e^{2 r_1}-e^{-2 r_2} & e^{2 r_1}+3 e^{-2 r_2} & e^{2 r_1}-e^{-2 r_2} & e^{2 r_1}-e^{-2 r_2} & 0 & 0 & 0 & 0 \\
				e^{2 r_1}-e^{-2 r_2} & e^{2 r_1}-e^{-2 r_2} & e^{2 r_1}+3 e^{-2 r_2} & e^{2 r_1}-e^{-2 r_2} & 0 & 0 & 0 & 0 \\
				e^{2 r_1}-e^{-2 r_2} & e^{2 r_1}-e^{-2 r_2} & e^{2 r_1}-e^{-2 r_2} & e^{2 r_1}+3 e^{-2 r_2} & 0 & 0 & 0 & 0 \\
				0 & 0 & 0 & 0 & e^{-2 r_1}+3 e^{2 r_2} & e^{-2 r_1}-e^{2 r_2} & e^{-2 r_1}-e^{2 r_2} & e^{-2 r_1}-e^{2 r_2} \\
				0 & 0 & 0 & 0 & e^{-2 r_1}-e^{2 r_2} & e^{-2 r_1}+3 e^{2 r_2} & e^{-2 r_1}-e^{2 r_2} & e^{-2 r_1}-e^{2 r_2} \\
				0 & 0 & 0 & 0 & e^{-2 r_1}-e^{2 r_2} & e^{-2 r_1}-e^{2 r_2} & e^{-2 r_1}+3 e^{2 r_2} & e^{-2 r_1}-e^{2 r_2} \\
				0 & 0 & 0 & 0 & e^{-2 r_1}-e^{2 r_2} & e^{-2 r_1}-e^{2 r_2} & e^{-2 r_1}-e^{2 r_2} & e^{-2 r_1}+3 e^{2 r_2} \\
			\end{array}\right]
			$}
		\eeq
		\hrulefill
		\vspace*{0pt}
	\end{figure*}
	After the application of the cvQFT one finds (\ref{example4}).
	\begin{figure*}[!t]
		\beq \label{example4}
		\B(V)^{\hbt{QFT}}=4\left[
		\begin{array}{cccccccc}
			e^{2 r_1} & 0 & 0 & 0 & 0 & 0 & 0 & 0 \\
			0 &  \cosh (2 r_2) & 0 & - \sinh (2 r_2) & 0 & 0 & 0 & 0 \\
			0 & 0 &  e^{-2 r_2} & 0 & 0 & 0 & 0 & 0 \\
			0 & - \sinh (2 r_2) & 0 &  \cosh (2 r_2) & 0 & 0 & 0 & 0 \\
			0 & 0 & 0 & 0 &  e^{-2 r_1} & 0 & 0 & 0 \\
			0 & 0 & 0 & 0 & 0 &  \cosh (2 r_2) & 0 &  \sinh (2 r_2) \\
			0 & 0 & 0 & 0 & 0 & 0 &  e^{2 r_2} & 0 \\
			0 & 0 & 0 & 0 & 0 &  \sinh (2 r_2) & 0 &  \cosh (2 r_2) \\
		\end{array}\right]
		\eeq
		\hrulefill
		\vspace*{0pt}
	\end{figure*}

	\section{CONCLUSIONS}
	Considering the importance of the dvQFT for its very many applications 
	in several
	fields, we have introduced the QFT for continuous variables. The 
	dvQFT is applied to
	qubits and therefore it seems to be natural to search for an 
	extension to continuous
	variables, where the qubits are replaced by Gaussian states. In this 
	search we have
	found that the appropriate definition must be given in terms of 
	rotation operators,
	whose unitary matrix is given by the DFT matrix. We have introduced the acronym
	cvQFT to indicate this new form of Fourier transform.
	Once given the general definition of cvQFT, we have established its 
	properties and
	especially we investigated its implementation with primitive components
	(single--mode rotations and beam splitters). This topic is well known 
	and deeply investigated
	in the literature under the topic of factorization of unitary complex 
	matrices. The
	Murnaghan algorithm seems to be the best solution for the implementation.
	As done successfully for the cvQFT, we have investigated the fast implementation
	of the cvQFT. Using the techniques of the digital signal processing, 
	we have formulated a very efficient implementation for the cvQFT. 
	
	In the second part, we analyze how cvQFT acts on Gaussian operations 
	and states, showing that adding a cvQFT after a 
	displacement–rotation–squeezing cascade simply Fourier–transforms 
	the displacement vector and squeeze matrix and applies a Fourier–like 
	similarity transform to the rotation matrix. These results suggest 
	that cvQFT may serve as a natural building block in the design and 
	analysis of multimode Gaussian networks, entanglement–generation 
	schemes, and continuous–variable signal–processing protocols where 
	Fourier–type mode mixing is required.

	\bibliographystyle{IEEEtran}
	\bibliography{qfourier}
	
\end{document}